\documentclass[11pt, letterpaper, reqno]{amsart}
\usepackage[utf8]{inputenc} 	
\usepackage{microtype} 			
\usepackage{geometry} 			
\usepackage{amsmath}			
\usepackage{amsthm}		 		
\usepackage{amssymb}	 		
\usepackage{bm}					
\usepackage{mathrsfs}			
\usepackage{xcolor}				
\definecolor{darkcandyapp}{rgb}{0.6, 0.1, 0.1}
\definecolor{darkblue}{rgb}{0.2, 0.2, 0.6}
\definecolor{darkgreen}{rgb}{0.2, 0.4,.1}
\definecolor{mellowyellow}{rgb}{1,.8,.2}
\usepackage{tikz}				
\usepackage{tikz-3dplot}
\usepackage{pgfplots}
\pgfplotsset{compat=1.13}
\usetikzlibrary{decorations.markings,decorations.pathmorphing,shapes}
\usepackage[all]{xy}			
\usepackage{graphicx}			
\usepackage{subcaption}
\usepackage{floatrow}
\usepackage{enumitem} 			
\usepackage{booktabs}
\usepackage{array}
\usepackage{lmodern}			
\usepackage[T1]{fontenc}		
\usepackage[colorlinks=true, citecolor=cyan, urlcolor=magenta]{hyperref}
\usepackage{cleveref}
\usepackage[style=alphabetic, date=year, backref=true, firstinits=true, isbn=false]{biblatex}
\setcounter{tocdepth}{1}
\addbibresource{SBGReg.bib}

\makeatletter
\renewbibmacro{in:}{}
\DeclareFieldFormat{pages}{#1}
\renewcommand*{\bibnamedash}{%
	\leavevmode\raise +0.6ex\hbox to 5.5ex{\hrulefill}.\space\space}

\InitializeBibliographyStyle{\global\undef\bbx@lasthash}

\newbibmacro*{bbx:savehash}{\savefield{fullhash}{\bbx@lasthash}}

\renewbibmacro*{author}{%
	\ifboolexpr{
		test \ifuseauthor
		and
		not test {\ifnameundef{author}}
	}
	{%
		\iffieldequals{fullhash}{\bbx@lasthash}
		{\bibnamedash\addcomma\space}
		{\printnames{author}}%
		\usebibmacro{bbx:savehash}%
		\iffieldundef{authortype}
		{}
		{%
			\setunit{\addcomma\space}%
			\usebibmacro{authorstrg}%
		}%
	}
	{\global\undef\bbx@lasthash}%
}
\makeatother
\geometry
{
	letterpaper,
	margin=1in,
	headheight=15pt
}

\newenvironment{proposition}
{\pushQED{\qed}\propositionx}
{\popQED\endpropositionx}

\newenvironment{theorem}
{\pushQED{\qed}\theoremx}
{\popQED\endtheoremx}

\newenvironment{lemma}
{\pushQED{\qed}\lemmax}
{\popQED\endlemmax}
\newenvironment{lemmap}
{\pushQED{\qed}\lemmax}
{\popQED\endlemmax}

\theoremstyle{remark}

\newenvironment{remark}
{\pushQED{\qed}\remarkx}
{\popQED\endremarkx}

\newenvironment{example}
{\pushQED{\qed}\examplex}
{\popQED\endexamplex}


\newcommand{\dd}{\,\mathrm{d}}

%


\newcommand{\bbC}{\mathbb{C}}

\newcommand{\bbN}{\mathbb{N}}

\newcommand{\bbR}{\mathbb{R}}
\newcommand{\bbS}{\mathbb{S}}
\newcommand{\bbT}{\mathbb{T}}

\newcommand{\bbZ}{\mathbb{Z}}


\newcommand{\calF}{\mathcal{F}}
\newcommand{\calG}{\mathcal{G}}
\newcommand{\calH}{\mathcal{H}}
\newcommand{\calI}{\mathcal{I}}
\newcommand{\calJ}{\mathcal{J}}
\newcommand{\calK}{\mathcal{K}}

\newcommand{\calM}{\mathcal{M}}

\newcommand{\calS}{\mathcal{S}}

\newcommand{\calU}{\mathcal{U}}



\newcommand{\frakS}{\mathfrak{S}}



\newcommand{\bmalpha}{\bm{\alpha}}
\newcommand{\bmbeta}{\bm{\beta}}
\newcommand{\bmgamma}{\bm{\gamma}}


\title{The regularization of Dotsenko--Fateev integrals}
\author{Ethan Sussman}
\date{August 4th, 2023.}
\email{ethanws@mit.edu}
\address{Department of Mathematics, Massachusetts Institute of Technology, Massachusetts 02139-4307, USA}
\subjclass[2020]{Primary 32A20; Secondary 33C60, 33C90, 81T40}

\begin{document}

\begin{abstract}
	We discuss the regularization of certain hypergeometric integrals appearing in 2D CFT, a step needed in the construction of the BPZ minimal models via the Coulomb gas formalism. The method is a generalization of Pochhammer's regularization of the Euler Beta-function. The constructions of the relevant homology classes are inspired by a recent singular-geometric analysis of the Dotsenko--Fateev integrand. 
\end{abstract}

\maketitle

\tableofcontents

\section{Introduction}

For each $\ell,m,n\in \bbN$ not all zero, let 
\begin{equation} 
	\square^{\ell,m,n}_x = [-\infty,0]^\ell_{x_1,\cdots,x_\ell} \times [0,1]^m_{x_{\ell+1},\cdots,x_{\ell+m}} \times [1,\infty]^n_{x_{\ell+m+1},\cdots,x_N} \subseteq \overline{\bbR}^N, 
\end{equation} 
where $N=\ell+m+n$. 
In \cite{Sussman}, we showed that the \emph{Dotsenko--Fateev integral} \cite{DF2,Felder,FS1,FW}
\begin{equation}
	I_{\ell,m,n}(\bmalpha,\bmbeta,\bmgamma) = \int_{\square^{\ell,m,n}_x} \prod_{i=1}^N x_i^{\alpha_i} (1-x_i)^{\beta_i} \prod_{1\leq j< k \leq N} (x_k-x_j+i0)^{2\gamma_{j,k}} \dd x_1 \cdots \dd x_N,
	\label{eq:I} 
\end{equation}
defined initially for $\bmalpha=\{\alpha_i\}_{i=1}^N, \bmbeta=\{\beta_i\}_{i=1}^N\in \bbC^N$ and $\bmgamma = \{\gamma_{j,k}=\gamma_{k,j}\}_{1\leq j<k \leq N} \in \bbC^{N(N-1)/2}$ such that the integrand above (which is defined using the principal branch of the logarithm, with the branch cut being along the positive real axis) lies in $L^1(\square^{\ell,m,n}_x)$, admits an analytic continuation to almost all $(\bmalpha,\bmbeta,\bmgamma) \in \smash{\bbC^{N}\times \bbC^N\times \bbC^{N(N-1)/2}}$.
More precisely, there exists a locally finite collection $\calH_{\ell,m,n}$ of complex affine hyperplanes' worth of parameters such that $I_{\ell,m,n}$ can be extended to an analytic function 
\begin{equation}
	I_{\ell,m,n} : (\bbC^{N}\times \bbC^N\times \bbC^{N(N-1)/2})\backslash (\cup_{H\in \calH_{\ell,m,n}} H)  \to \bbC. 
\end{equation}
Since the domain is connected, this extension is unique. Moreover, $\calH_{\ell,m,n}$ does not contain any affine hyperplanes of the form $\{\gamma=c\}$ for $c\in \bbC$. This final observation is critical to the application of Dotsenko--Fateev integrals in the CFT literature. This application, part of the celebrated Coulomb-gas formalism, has yet to be made fully rigorous, in part because of difficulties in analyzing the the singularity structure of the $I_{\ell,m,n}$.

We re-prove the stated result here using an argument inspired by the treatment of Selberg-like integrals in \cite{KT2, KT1}. In fact, the Dotsenko--Fateev integrals are sums of Selberg-like integrals, so the meromorphic continuation of the $I_{\ell,m,n}$ to \emph{almost all} $\bmalpha,\bmbeta,\bmgamma$ is a corollary. 
However, this does not yield the fact that $\calH_{\ell,m,n}$ does not contain any affine hyperplanes of the form $\{\gamma=c\}$. Indeed, the resultant meromorphic continuation has an apparent singularity when any of the $\gamma_{j,k}$ are taken equal to $-1$, which is precisely the value appearing in the CFT literature when working with both possible types of screening charges. 
For this application, generic values of the parameters do not suffice. 
It is a fact of life that the apparent singularity is removable, but one must prove it. The argument below is one means of doing so. 

One reason to prefer the regularization method here over that in \cite{Sussman} is that it can be applied verbatim to the formal integrals in \cite{Felder,FS1,FS2} defining the screened vertex \emph{operators}, out of which the chiral algebras of the Belavin-Polyakov-Zamolodchikov (BPZ) minimal models \cite{BPZ} are supposed  to be constructed. Our earlier paper applied only to the 3-point coefficients.
Our new regularization yields a definition of the screened vertex operators as honest bounded linear maps between appropriate Hilbert spaces. 
On the analytic side, this relies on the treatment of the unscreened case in \cite{Boenkost,Bovier}.

In the $N=1$ case, the method applied here, as well as in \cite{KT2, KT1}, becomes Pochhammer's regularization of of the Euler Beta function 
\begin{equation}
	B(\alpha+1,\beta+1) = \int_0^1 x^\alpha(1-x)^\beta \dd x = \frac{\Gamma(1+\alpha)\Gamma(1+\beta)}{\Gamma(2+\alpha+\beta)}. 
\end{equation}
Indeed, $B$ is $I_{0,1,0}$ up to a conventional $+1$ shift of its arguments. The \emph{Pochhammer contour} is the element 
\begin{equation} 
	b^{-1} a^{-1} ba \in \pi_1(\bbC\backslash\{0,1\}),
\end{equation} 
where $a,b$ are the generators of $\pi_1(\bbC\backslash \{0,1\})$ corresponding to one counterclockwise circuit around $0,1$ respectively. This lifts to a closed contour $\Gamma$ in a cover of $\bbC\backslash \{0,1\}$ on which $z^\alpha(1-z)^\beta$ defines a single-valued function. Then, choosing the branch of this lift appropriately, 
\begin{equation}
	\int_{\Gamma} z^\alpha(1-z)^\beta \dd z = -4\sin(\pi \alpha)\sin(\pi \beta) B(\alpha+1,\beta+1), 
\end{equation} 
with different choices of branch giving the same result up to a phase (which does not matter as far as the rest of the argument is concerned). Rearranging, 
\begin{equation}
	B(\alpha+1,\beta+1) = -\frac{1}{4\sin(\pi \alpha)\sin(\pi \beta)} \int_{\Gamma} z^\alpha(1-z)^\beta \dd z
	\label{eq:misc_007}
\end{equation}
as long as we are not dividing by $0$.
A priori (that is, without already knowing the analyticity properties of $B$), the integral $\int_\Gamma z^\alpha(1-z)^\beta \dd z$ manifestly makes sense \emph{for any} $\alpha,\beta \in \bbC$ and defines an entire function of these parameters. The meromorphic singularities of $B$ therefore all appear in the prefactor $1/\sin(\pi\alpha)\sin(\pi\beta)$ on the right-hand side of \cref{eq:misc_007}.

\begin{figure}[h]
	\begin{tikzpicture}[scale=3.5,  decoration={
			markings,
			mark=at position 0.5 with {\arrow[scale=1.5,>=latex]{>}}}]
		\coordinate (0) at (1.2,.2);
		\coordinate (1) at (1.2,-.35);
		\coordinate (2) at (-.2,-.35);
		\coordinate (3) at (-.2,+.2);
		\coordinate (4) at (.35,+.2);
		\coordinate (5) at (.35,-.25);
		\coordinate (6) at (1.1,-.25);
		\coordinate (7) at (1.1,.1);
		\coordinate (8) at (-.1,.1);
		\coordinate (9) at (-.1,-.2);
		\coordinate (10) at (.65,-.2);
		\coordinate (11) at (.65,.2);
		\draw[postaction={decorate}] (0) -- (1);
		\draw[postaction={decorate}] (1) -- (2);
		\draw[postaction={decorate}] (2) -- (3);
		\draw[postaction={decorate}] (3) -- (4);
		\draw[postaction={decorate}] (4) -- (5);
		\draw[postaction={decorate}] (5) -- (6);
		\draw[postaction={decorate}] (6) -- (7);
		\draw[postaction={decorate}] (7) -- (8);
		\draw[postaction={decorate}] (8) -- (9);
		\draw[postaction={decorate}] (9) -- (10);
		\draw[postaction={decorate}] (10) -- (11);
		\draw[postaction={decorate}] (11) -- (0);
		\filldraw[color=black] (0,0) circle (.5pt) node[below] {$0$};
		\filldraw[color=black] (1,0) circle (.5pt) node[below] {$1$};
	\end{tikzpicture}
	\caption{A contour in $\bbC\backslash \{0,1\}$ homotopic to the Pochhammer contour.}
\end{figure}

Since the Beta function is easily computed in terms of the Gamma function, there are more direct ways of producing its meromorphic extension. However, for the general Dotsenko--Fateev integrals considered here, we do not have (even conjecturally) a formula. In \cite{DF3}, Dotsenko and Fateev do claim a formula for the integrals showing up as the 3-point coefficients between primary fields in the BPZ minimal models, but making the computation rigorous seems to require already knowing some variant of the result proven here. Moreover, when computing matrix elements involving in- and out-states besides the vacuum, more general integrals than those considered by Dotsenko--Fateev -- but still of the form \cref{eq:I} -- appear. For these, Dotsenko--Fateev do not even conjecture a formula, and their argument cannot be modified to suggest one. So, it is desirable to have a means of producing meromorphic extensions that does not rely on explicit formula holding on some small subspace worth of parameters.

The integrand 
\begin{equation} 
	\omega(\bmalpha,\bmbeta,\bmgamma) =  \prod_{i=1}^N z_i^{\alpha_i} (1-z_i)^{\beta_i} \prod_{1\leq j< k \leq N} (z_k-z_j)^{2\gamma_{j,k}} \dd z_1 \wedge \cdots \wedge \dd z_N 
\end{equation} 
of the Dotsenko--Fateev integral, the ``$+i0$'' having been removed, is a multi-valued analytic $N$-form on the moduli space $\calM_N(0,\infty)$, where, for $0\leq r<1<R \leq \infty$,
\begin{equation}
	\calM_N(r,R) = 
	\{(z_1,\cdots,z_N)\in (\bbC\backslash \{0,1\})^N : z_1,\cdots,z_N\text{ distinct, s.t.\ } r< |z_j| < R\text{ for all }j\}.
\end{equation}
This is the moduli space of $N$ pairwise distinct elements of the punctured annulus $\{z\in \bbC : r< |z|<R , z\neq 1\}$.
Alternatively, we can consider $\omega(\bmalpha,\bmbeta,\bmgamma)$ as a \emph{single-valued} analytic function on the monodromy cover
\begin{equation}
	\widehat{\calM}_N(r,R) = \widetilde{\calM}_N(r,R) \backslash [\pi_1(\calM_N(r,R) ),\pi_1(\calM_N(r,R) ) ],
\end{equation}
which is also a complex manifold. Thus, we can consider $\omega(\bmalpha,\bmbeta,\bmgamma)$ as an element of $\smash{\Omega^{N}(\widehat{\calM}_N(0,\infty))}$, depending analytically on $\bmalpha,\bmbeta,\bmgamma$.
As a consequence of analyticity in $z_1,\cdots,z_N$, this form is closed and therefore defines an element of de Rham cohomology. This cohomology class is nonzero.

The problem of regularizing the Dotsenko--Fateev integral $I_{\ell,m,n}$ is therefore to find a \emph{multi-contour} 
\begin{equation} 
	\Gamma_{\ell,m,n} \in H_{N}(\widehat{\calM}_N(0,\infty);\bbZ)
\end{equation} 
such that
\begin{equation} 
	\int_{\Gamma_{\ell,m,n}} \omega(\bmalpha,\bmbeta,\bmgamma) \propto I_{\ell,m,n}(\bmalpha,\bmbeta,\bmgamma),
	\label{eq:misc_012} 
\end{equation} 
where the ``$\propto$'' denotes proportionality up to constants and trigonometric functions of $\bmalpha,\bmbeta,\bmgamma$. The contribution of this work is the construction of such an element. This has been elusive since \cite{FS2}, in which a similar result was a necessary element of their main argument. Lacking a proof, it was conjectured therein.

The class $\Gamma_{\ell,m,n}$ has the form 
\begin{equation}
	\Gamma_{\ell,m,n} = (\iota_{\ell,m,n})_{*}([\mathsf{2}(A_{\ell,m,n})] ), 
	\label{eq:misc_013}
\end{equation}
where $\mathsf{2}(A_{\ell,m,n})$ is a compact, orientable, smoothable $C^0$-manifold without boundary constructed in a relatively simple way out of explicit smooth manifolds-with-corners $A_{\ell,m,n}$ defined in \cite[\S2.2]{Sussman}, $[\mathsf{2}(A_{\ell,m,n})] \in H_N(\mathsf{2}(A_{\ell,m,n});\bbZ)$ is the fundamental class of this manifold, and
\begin{equation} 
\iota_{\ell,m,n}: \mathsf{2}(A_{\ell,m,n}) \to \widehat{\calM}_N(0,\infty)
\label{eq:misc_14}
\end{equation} 
is some continuous map defined below. Most of the work below consists in defining $\iota_{\ell,m,n}$.

When $N=1$, the manifold $\mathsf{2}(A_{\ell,m,n}) \in \{ \mathsf{2}(A_{1,0,0}), \mathsf{2}(A_{0,1,0}), \mathsf{2}(A_{0,0,1})\}$ is just $\bbS^1$, and $\iota_{\ell,m,n}$ is just the lift of the Pochhammer contour, up to homotopy (and possibly a choice of branch).

\begin{figure}[t!]
	\begin{center} 
		\tdplotsetmaincoords{75}{110}
		\begin{tikzpicture}[scale=2.25,tdplot_main_coords]
			\draw[opacity=0] (1.5,0,0) -- (0,1.5,0) -- (0,0,1.5); 
			\draw[fill=gray,fill opacity = .1] (1,.2,0) -- (1,.8,0) -- (1,1,.2) -- (1,1,1) -- (1,.2,1) -- cycle;%
			\draw[fill=gray,fill opacity = .1] (1,1,.2) -- (1,1,1) -- (.2,1,1) -- (0,1,.8) -- (0,1,.2) -- cycle;%
			\draw[fill=gray,fill opacity = .1] (1,1,1) -- (.2,1,1) -- (.2,0,1) -- (.8,0,1) -- (1,.2,1) -- cycle;%
			\draw[fill=gray,fill opacity = .1] (1,1,.2) -- (0,1,.2) -- (0,.8,0) -- (1,.8,0) -- cycle;%
			\draw[fill=gray,fill opacity = .1] (1,.2,0) -- (.8,0,0) -- (.8,0,1) -- (1,.2,1) -- cycle;%
			\draw[fill=gray,fill opacity = .1, draw=none] (.2,1,1) -- (0,1,.8) -- (0,0,.8) -- (.2,0,1) -- cycle;%
			\draw[fill=gray,fill opacity = .1, dashed] (0,0,.8) -- (.2,0,1) -- (.8,0,1) -- (.8,0,0) -- (0,0,0) -- cycle;%
			\draw[fill=gray,fill opacity = .1, dashed] (0,0,0) -- (0,0,.8) -- (0,1,.8) -- (0,1,.2) -- (0,.8,0) -- cycle;%
			\draw[fill=gray,fill opacity = .1, draw=none] (0,0,0) -- (.8,0,0) -- (1,.2,0) -- (1,.8,0) -- (0,.8,0) -- cycle;%
			\node (lb) at (0,.2,1.2) {$A_{1,1,1}$};%
		\end{tikzpicture}
		\begin{tikzpicture}[scale=2.25,tdplot_main_coords]
			\draw[opacity=0] (1.5,0,0) -- (0,1.5,0) -- (0,0,1.5); 
			\draw[fill=gray,fill opacity = .1] (1,.2,.1) -- (1,.1,.2) -- (1,.1,1) -- (1,.8,1) -- (1,1,.8) -- (1,1,.1) -- cycle;%
			\draw[fill=gray,fill opacity = .1] (1,1,.1) -- (.9,1,0) -- (0,1,0) -- (0,1,.8) -- (1,1,.8) -- cycle;%
			\draw[fill=gray,fill opacity = .1] (1,.8,1) -- (0,.8,1) -- (0,0,1) -- (.9,0,1) -- (1,.1,1) -- cycle;%
			\draw[fill=gray,fill opacity = .1] (1,.8,1) -- (0,.8,1) -- (0,1,.8) -- (1,1,.8) -- cycle;%
			\draw[fill=gray,fill opacity = .1] (1,1,.1) -- (.9,1,0) -- (.9,.2,0) -- (1,.2,.1) -- cycle;%
			\draw[fill=gray,fill opacity = .1] (1,.1,1) -- (.9,0,1) -- (.9,0,.2) -- (1,.1,.2) -- cycle;%
			\draw[fill=gray,fill opacity = .1] (1,.2,.1) -- (1,.1,.2) -- (.9,0,.2) -- (.8,0,.1) -- (.8,.1,0) -- (.9,.2,0) -- cycle;%
			\draw[fill=gray,fill opacity = .1, draw=none] (0,.1,0) -- (0,0,.1) -- (.8,0,.1) -- (.8,.1,0) -- cycle;%
			\draw[fill=gray,fill opacity = .1, dashed] (.9,.2,0) -- (.9,1,0) -- (0,1,0) -- (0,.1,0) -- (.8,.1,0) -- cycle; %
			\draw[fill=gray,fill opacity = .1, dashed] (.8,0,.1) -- (.9,0,.2) -- (.9,0,1) -- (0,0,1) -- (0,0,.1) -- cycle; %
			\draw[fill=gray,fill opacity = .1, dashed] (0,0,.1) -- (0,.1,0) -- (0,1,0) -- (0,1,.8) -- (0,.8,1) -- (0,0,1) -- cycle; %
			\node (lb) at (0,.2,1.2) {$A_{1,2,0}$}; %
		\end{tikzpicture}
		\begin{tikzpicture}[scale=2.25,tdplot_main_coords]
			\draw[opacity=0] (1.5,0,0) -- (0,1.5,0) -- (0,0,1.5); 
			\draw[fill=gray,fill opacity = .1] (1,.9,.8) -- (1,.8,.9) -- (.9,.8,1) -- (.8,.9,1) -- (.8,1,.9) -- (.9,1,.8) -- cycle;%
			\draw[fill=gray,fill opacity = .1] (1,.9,.8) -- (.9,1,.8) -- (.9,1,0) -- (1,.9,0) -- cycle;%
			\draw[fill=gray,fill opacity = .1] (1,.8,.9) -- (.9,.8,1) -- (.9,0,1) -- (1,0,.9) -- cycle;%
			\draw[fill=gray,fill opacity = .1] (.8,.9,1) -- (.8,1,.9) -- (0,1,.9) -- (0,.9,1) -- cycle;%
			\draw[fill=gray,fill opacity = .1] (.8,1,.9) -- (.9,1,.8) -- (.9,1,0) -- (.1,1,0) -- (0,1,.1) -- (0,1,.9) -- cycle;%
			\draw[fill=gray,fill opacity = .1] (1,.9,.8) -- (1,.8,.9) -- (1,0,.9) -- (1,0,.1) -- (1,.1,0) -- (1,.9,0) -- cycle;%
			\draw[fill=gray,fill opacity = .1] (.9,.8,1) -- (.8,.9,1) -- (0,.9,1) -- (0,.1,1) -- (.1,0,1) -- (.9,0,1) -- cycle;%
			\draw[fill=gray,fill opacity = .1,dashed] (0,.1,.2) -- (0,.2,.1) -- (.1,.2,0) -- (.2,.1,0) -- (.2,0,.1) -- (.1,0,.2) -- cycle;%
			\draw[fill=gray,fill opacity = .1,draw=none] (0,.1,.2) -- (.1,0,.2) -- (.1,0,1) -- (0,.1,1) -- cycle;%
			\draw[fill=gray,fill opacity = .1,draw=none] (0,.2,.1) -- (.1,.2,0) -- (.1,1,0) -- (0,1,.1) -- cycle;%
			\draw[fill=gray,fill opacity = .1,draw=none] (.2,.1,0) -- (.2,0,.1) -- (1,0,.1) -- (1,.1,0) -- cycle;%
			\draw[fill=gray,fill opacity = .1,dashed] (.2,0,.1) -- (.1,0,.2) -- (.1,0,1) -- (.9,0,1) -- (1,0,.9) -- (1,0,.1) -- cycle;%
			\draw[fill=gray,fill opacity = .1,dashed] (0,.1,.2) -- (0,.2,.1) -- (0,1,.1) -- (0,1,.9) -- (0,.9,1) -- (0,.1,1) -- cycle;%
			\draw[fill=gray,fill opacity = .1,dashed] (.1,.2,0) -- (.2,.1,0) -- (1,.1,0) -- (1,.9,0) -- (.9,1,0) -- (.1,1,0) -- cycle;%
			\node (lb) at (0,.2,1.2) {$A_{0,3,0}$};
		\end{tikzpicture}
	\end{center} 
	\caption{The three mwcs $A_{1,1,1}$, $A_{1,2,0}$, $A_{0,3,0}$. }
	\label{fig:A}
\end{figure}
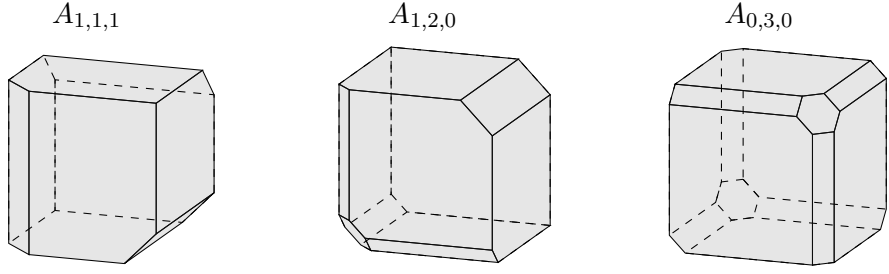

As the dimension $N=\ell+m+n$ increases, the topological complexity of these objects increases rapidly. For instance, as computed below, the manifolds $\mathsf{2}(A_{0,2,0})\cong \mathsf{2}(A_{2,0,0}) \cong \mathsf{2}(A_{0,0,2})$ are already 17-holed tori. The somewhat simpler manifolds $\mathsf{2}(A_{1,1,0})\cong \mathsf{2}(A_{1,0,1}) \cong \mathsf{2}(A_{0,1,1})$ are 5-holed tori. For $N\geq 3$, it does not seem possible to give a more succinct description of these objects than the constructions themselves. 

We now state the precise version of \cref{eq:misc_012} that is our main theorem. For each $\calS\subseteq \{1,\cdots,N\}$, let  
\begin{equation}
	\alpha_{\calS} = \sum_{j\in \calS} \alpha_j + 2 \sum_{1\leq j<k\leq N,\, j,k\in \calS} \gamma_{j,k}, \qquad 
	\beta_{\calS} = \sum_{j\in \calS} \beta_j + 2 \sum_{1\leq j<k\leq N,\, j,k\in \calS} \gamma_{j,k},
	\label{eq:abdef} 
\end{equation}
\begin{equation}
	\zeta_{\calS} = - \sum_{j\in \calS}(\alpha_j+\beta_j) - 2 \sum_{\substack{1\leq j<k \leq N \\ j\in \calS\text{ or }k\in \calS}} \gamma_{j,k}.
	\label{eq:zetadef}
\end{equation}
Then, our main result is: 
\begin{theorem}
	For each $r,R\in [0,\infty]$ such that $0\leq r<1<R\leq \infty$, there exists a continuous map $\iota_{\ell,m,n}[r,R] :\mathsf{2}(A_{\ell,m,n}) \to  \widehat{\calM}_N(r,R)$ such that 
	\begin{multline}
		\int_{\iota_{\ell,m,n}[r,R]_*([\mathsf{2}(A_{\ell,m,n})]) }  \omega(\bmalpha,\bmbeta,\bmgamma) = \Big[ \prod_{\varnothing\subsetneq \calS\subseteq \{1,\cdots,\ell+m\}} 2i\sin(\pi \alpha_{\calS} )  \Big] \Big[ \prod_{\varnothing \subsetneq \calS \subseteq \{\ell+1,\cdots,N\}} 2i\sin(\pi \beta_{\calS})  \Big] \\ \times  \Big[ \prod_{\varnothing\subsetneq \calS \subseteq \{1,\cdots,\ell\}\cup \{\ell+m+1,\cdots,N\} } 2i\sin(\pi \zeta_{\calS})  \Big] I_{\ell,m,n}(\bmalpha,\bmbeta,\bmgamma)
		\label{eq:fundamental}
	\end{multline}
	for all $\bmalpha,\bmbeta\in \bbC^N$ and $\bmgamma \in \bbC^{N(N-1)/2}$ for which $I_{\ell,m,n}(\bmalpha,\bmbeta,\bmgamma)$ is defined. 
	\label{thm}
\end{theorem}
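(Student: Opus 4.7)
My plan is to reduce to the $N=1$ Pochhammer computation boundary-hypersurface by boundary-hypersurface. Per \cite[\S2.2]{Sussman}, $A_{\ell,m,n}$ is a resolution of $\square^{\ell,m,n}_x$ (by iterated blowup of partial diagonals and coordinate hyperplanes) whose boundary hypersurfaces should be in natural bijection with the nonempty subsets $\calS$ indexing the three products of sines in \cref{eq:fundamental}: each $\calS \subseteq \{1,\dots,\ell+m\}$ corresponds to a cluster of coordinates at $0$, each $\calS \subseteq \{\ell+1,\dots,N\}$ to a cluster at $1$, and each $\calS \subseteq \{1,\dots,\ell\}\cup\{\ell+m+1,\dots,N\}$ to a cluster at $\infty$. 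The key compatibility is that near the boundary hypersurface $H$ of ``cluster-at-$0$'' type attached to $\calS$, the pullback of $\omega$ has leading behavior $\rho^{\alpha_\calS}$ in a boundary defining function $\rho$, with $\alpha_\calS$ given by \cref{eq:abdef}; this is because the factors $x_j^{\alpha_j}$ (for $j\in \calS$) and $(x_k-x_j)^{2\gamma_{j,k}}$ (for $j,k\in\calS$) simultaneously vanish at $H$ and their exponents sum to $\alpha_\calS$. Analogous statements hold for $\beta$- and $\zeta$-faces.

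I interpret $\mathsf{2}(A_{\ell,m,n})$ as the iterated double of $A_{\ell,m,n}$ -- doubling once across each of its $B$ boundary hypersurfaces in turn -- yielding $2^B$ copies of $A_{\ell,m,n}$ indexed by subsets $T$ of the face set and glued along shared faces. Pick a continuous ``reference'' map $\iota_0 : A_{\ell,m,n} \to \widehat{\calM}_N(r,R)$ whose image is a small real perturbation of $\square^{\ell,m,n}_x$ lifted to a chosen principal branch, and define $\iota_{\ell,m,n}[r,R]$ on the copy indexed by $T$ to be $\tau_T \circ \iota_0$, where $\tau_T$ is the deck transformation corresponding to $\prod_{H\in T}\sigma_H \in \pi_1(\calM_N(r,R))/[\pi_1,\pi_1]$ and $\sigma_H$ is the small positive loop encircling the singular locus associated with $H$. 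Since these deck transformations live in the abelian cover, their composition is order-independent and compatible with the corner structure of $A_{\ell,m,n}$; adjacent copies glue continuously along their shared face because $\sigma_H$ is nullhomotopic there, and a consistent global orientation is obtained by alternating signs at each doubling.

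The integral then splits as a sum over the $2^B$ copies: on the copy labeled $T$, $\iota_{\ell,m,n}[r,R]^*\omega = (-1)^{|T|}\prod_{H\in T}e^{2\pi i \theta_H}\cdot \iota_0^*\omega$, where $\theta_H \in \{\alpha_{\calS(H)}, \beta_{\calS(H)}, \zeta_{\calS(H)}\}$ depending on the type of $H$. Summing yields
\begin{equation*}
\prod_{H}(1-e^{2\pi i \theta_H})\cdot\int_{A_{\ell,m,n}}\iota_0^*\omega = \prod_H\bigl(-2i\, e^{i\pi\theta_H}\sin(\pi\theta_H)\bigr)\cdot \int_{A_{\ell,m,n}}\iota_0^*\omega.
\end{equation*}
Choosing the branch of $\iota_0$ to absorb the unit-modulus prefactor $\prod_H(-e^{i\pi\theta_H})$ and to make $\int_{A_{\ell,m,n}}\iota_0^*\omega = I_{\ell,m,n}$ gives \cref{eq:fundamental}; this is the multidimensional analogue of the branch choice following \cref{eq:misc_007}. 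Independence from the parameters $(r,R)$ is immediate since different choices are related by homotopies through $\widehat{\calM}_N(0,\infty)$.

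The main obstacle is the careful identification of leading exponents combined with compatibility at high-codimension corners of $A_{\ell,m,n}$. At a codimension-$k$ corner where faces $H_1,\dots,H_k$ meet, the local model for $\calM_N$ near $\iota_0$ is a punctured polydisk with the $H_i$'s corresponding to factors, so the monodromy generators $\sigma_{H_i}$ already commute in $\pi_1$ -- which is precisely why it suffices to pass to the abelian cover $\widehat{\calM}_N$ rather than the universal cover. The exponent identification reduces to a bookkeeping exercise in the resolution combinatorics from \cite[\S2.2]{Sussman}: each blowup stratum introduced in constructing $A_{\ell,m,n}$ is labeled by a cluster type, and the resulting leading exponent is the appropriate cluster sum in \cref{eq:abdef,eq:zetadef}. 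Checking that the combinatorial face labeling and the three sine index sets coincide exactly as asserted above is the non-routine technical heart of the argument.
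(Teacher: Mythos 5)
Your endgame is essentially the paper's: the integral over $\mathsf{2}(A_{\ell,m,n})$ decomposes as a sum over the $2^{|\calF(A_{\ell,m,n})|}$ sheets, each contributing $(-1)^{|T|}\prod_{H\in T}e^{2\pi i\theta_H}$, and
\begin{equation*}
\sum_{T\subseteq\calF(A_{\ell,m,n})}(-1)^{|T|}\prod_{H\in T}e^{2\pi i\theta_H} \;=\; \prod_H\bigl(1-e^{2\pi i\theta_H}\bigr)\;=\;\prod_H\bigl(-2ie^{i\pi\theta_H}\sin(\pi\theta_H)\bigr),
\end{equation*}
with $\theta_H\in\{\alpha_\calS,\beta_\calS,\zeta_\calS\}$ and the parameter-dependent phase $\prod_H e^{i\pi\theta_H}$ absorbed by the branch choice, exactly as in the paper's Section 6. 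You also correctly read off the face exponents $\alpha_\calS,\beta_\calS,\zeta_\calS$ from the vanishing orders of the resolved integrand.

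However, the construction of $\iota_{\ell,m,n}$ has a genuine gap that I do not see how to patch within your framework. You define the map on the sheet labeled $T$ as $\tau_T\circ\iota_0$ for a \emph{fixed} reference $\iota_0:A_{\ell,m,n}\to\widehat{\calM}_N(r,R)$, with $\tau_T$ a deck transformation, and then assert that adjacent sheets ``glue continuously along their shared face because $\sigma_H$ is nullhomotopic there.'' But $\widehat{\calM}_N$ is a connected cover of a manifold, so its deck group acts \emph{freely}: $\sigma_H$ has no fixed points. Hence $\tau_T\circ\iota_0$ and $\tau_{T\Delta\{H\}}\circ\iota_0$ have disjoint images in $\widehat{\calM}_N$ and cannot agree on $\Pi^{-1}(H)$, let alone on higher-codimension corners. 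The element $\sigma_H\in\pi_1(\calM_N)^{\mathrm{ab}}$ is a fixed nontrivial class; it does not ``become nullhomotopic'' near any face. (Already in the $N=1$ case this fails: four disjoint translates of a segment $[\delta,1-\delta]$ in the abelian cover do not assemble into a circle; the Pochhammer contour requires the small arcs around $0$ and $1$ joining them.)

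What your outline omits is precisely the interpolation across $\Pi^{-1}(H)$, and supplying it consistently is the bulk of the paper. Rather than abutting rigid deck-translates of a reference map, the paper replaces each signed bdf $\tilde\varrho_{\mathrm{F},\epsilon}$ by the nonvanishing $P_{\delta,\varepsilon}(\tilde\varrho_{\mathrm{F},\epsilon})$, which traces a Pochhammer arc around $0$ as $\tilde\varrho_{\mathrm{F},\epsilon}$ changes sign; so a single smooth map $Z[\epsilon,\delta,\varepsilon,\digamma]$ is defined on all of $\mathsf{2}(A_{\ell,m,n})$ and the monodromy shift is acquired continuously as one crosses $\Pi^{-1}(\mathrm{F})$. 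One then has to prove that (i) the local formulas patch to a global map (the paper's first proposition in the ``Main construction'' section), (ii) the image avoids $0$, $1$, and \emph{all} the partial diagonals $\{z_j=z_k\}$ --- the last of which requires the auxiliary rotations $e^{ij/\digamma}$ and a separate estimate near the interior p-submanifolds $H_{j,k}$ (Proposition \ref{prop}) --- and (iii) the resulting map into $\calM_N$ lifts to $\widehat{\calM}_N$, for which one needs the groupoid-level Lemma \ref{lem:lifting_lemma}. None of these steps is routine, and step (ii) in particular has no analogue in your outline because a fixed ``small real perturbation of $\square^{\ell,m,n}_x$'' does not automatically separate coordinates that cluster at the same point of $\{0,1,\infty\}$.
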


\begin{remark}
	In fact, we will construct $\iota_{\ell,m,n}[r,R]$ such that it is smooth with respect to some explicit smooth structure on $\mathsf{2}(A_{\ell,m,n})$. The construction should output an immersion, at least for generic values of the parameters appearing, but checking this is beyond the scope of this paper. 
	Hence, $\Gamma_{\ell,m,n}$ is the homology class of an immersed $C^\infty$-submanifold. 
\end{remark}
 
The meromorphic extension of the Dotsenko--Fateev integrals follows.

\section{Doubles of manifolds-with-corners}

Suppose that $M$ is a $C^\infty$-manifold-with-corners (mwc) in the sense of Melrose \cite{MelroseMWC,MelroseCorners}. Roughly, this means that $M$ is a smooth space modeled on $[0,\infty)^K \times \bbR^J$ for some $J,K\in \bbN$, in which boundary components are all embedded submanifolds. 
Just as out of any manifold-with-boundary one can construct its double as a $C^0$-manifold, from $M$ can be formed a canonical $C^0$-manifold without boundary 
\begin{equation}
	\mathsf{2}(M) = \Big(\bigcup_{\calF\subseteq  \mathcal{F}(M) } \{ (\calF, p) :p\in M\}\Big)/\!\sim\; = \Big( \bigsqcup_{\calF\subseteq \calF(M)} M \Big) /\!\sim 
\end{equation}
by gluing together $2^{|\calF(M)|}$ copies of $M$ along its various facets. Here, $\calF(M)$ is the set of facets (i.e.\ boundary hypersurfaces; the term ``face'' can be used to refer to boundary components of arbitrary codimension $\geq 1$) of $M$, which we will always assume are connected, and the equivalence relation $\sim$ is defined by 
\begin{equation}
	(\calF,p) \sim (\calF',p') \iff 
	\begin{cases}
		p=p' \text{ and } \\ 
		p\in \bigcap_{\mathrm{f}\in (\calF \Delta \calF')} \mathrm{f}, 
	\end{cases}
\end{equation}
where $\calF \Delta \calF' \subseteq \calF(M)$ is the symmetric difference between the sets $\calF,\calF'$. 
The topology on $\mathsf{2}(M)$ is the quotient topology, and it can be checked that it has the structure of a $C^0$-manifold. The map $\Pi:\mathsf{2}(M)\to M$ given by $[(\calF,p)]_\sim \mapsto p$ is continuous. 
Melrose calls $\mathsf{2}(M)$ the \emph{double} of $M$, even though $\mathsf{2}(M)$ consists of more than two sheets of $M$ if $M$ has more than one boundary hypersurface. For each $\calF\subseteq \calF(M)$, we can regard $\{[(\calF,p)]_\sim:p\in M\}$ as a smooth mwc diffeomorphic to $M$ via $\Pi$. In this sense, $\mathsf{2}(M)$ is piecewise-smooth.

If $M$ is compact, then $\mathsf{2}(M)$ is a compact topological manifold without boundary. 
If $M$ is already a manifold without boundary, then $\mathsf{2}(M)$ is just $M$ itself, with the smooth structure forgotten.

If $M$ is a manifold-with-boundary with a connected boundary, then $\mathsf{2}(M)$ is just the usual doubling of $M$ across the lone boundary, except that we are not yet endowing it with a smooth structure (as the smooth structure is actually not canonical).

\begin{example}
	Consider the interval $I=[0,1]$, which we regard as a manifold-with-corners with two different boundary components, the singletons $\{0\}$ and $\{1\}$. Then, $\mathsf{2}([0,1])\cong\bbS^1$. Note that $\smash{\#\Pi^{-1}(p)}=4$ for each $p\in (0,1)$, which is different from what happens when one doubles $I$ considered as a manifold-with-boundary with a single disconnected boundary hypersurface.
\end{example}
Since $\mathsf{2}(X\times Y)\cong\mathsf{2}(X)\times \mathsf{2}(Y)$ for any mwcs $X,Y$, the previous example implies that the double $\mathsf{2}([0,1]^N)$ of the $N$-cube is given by $\mathsf{2}([0,1]^N)\cong \bbT^N$, for any $N\in \bbN^+$. 

\begin{example}
	If $M \subset \bbR^2$ is $\{(x,y)\in \bbR^2: |y| \leq 1-x^2\}$, which we consider as a mwc with two boundary edges 
	\begin{equation} 
		\mathrm{e}_\pm=\{(x,y)\in \bbR^2: |x|\leq1\text{ and }\pm y = 1-x^2\},
	\end{equation} 
	then $\mathsf{2}(M)\cong \bbS^2$, and this is assembled from four copies of the original mwc in a manner reminiscent of the paneling of a standard American football. 
\end{example}

\begin{example}
	If $M$ is a convex $N$-gon for $N\geq 3$ (which is a sub-mwc of $\mathbb{R}^2$), then $\mathsf{2}(M)$ is the orientable surface of genus $g=\smash{N2^{N-3}-2^{N-1}+1}$. Indeed, it is orientable (see below) and constructed as a cell complex of $2^N$ cells, each of which is a copy of $M$. This cell decomposition has $F=2^N$ facets, $E=N 2^{N-1}$ edges, and $V=N 2^{N-2}$ vertices, so 
	\begin{equation} 
		g=1-\frac{1}{2}\chi(\mathsf{2}(M)) = 1 -  \frac{F-E+V}{2} =  N2^{N-3}-2^{N-1}+1. 
	\end{equation}  
	For example, the double $\mathsf{2}(\triangle)$ of a triangle $\triangle$ is a topological sphere (realized as an octahedron), the double of a square is a torus, as already seen, the double of the pentagon $A_{1,1,0}$ is a 5-holed torus, and the double of the hexagon $A_{0,2,0}$ is a 17-holed torus.
	
	The genus is increasing exponentially in $N$.
\end{example}

\begin{figure}
	\begin{center}
		\begin{tikzpicture}[scale=2.7, rotate around y=15, rotate around z=0, rotate around x=-135]
			\coordinate (t) at (0,0,.707);
			\coordinate (b) at (0,0,-.707);
			\coordinate (1) at (.5,.5,0);
			\coordinate (2) at (-.5,.5,0);
			\coordinate (3) at (-.5,-.5,0);
			\coordinate (4) at (.5,-.5,0);
			\draw[darkblue, dashed] (t) -- (2);
			\draw[mellowyellow, dashed] (1) -- (2);
			\draw[darkcandyapp, dashed] (t) -- (1);
			\fill[fill=gray!10, opacity=.5] (t) -- (1) -- (2) -- cycle;
			\fill[fill=gray!10, opacity=.5] (t) -- (2) -- (3) -- cycle;
			\draw[darkgray, dashed]  (-.33,0,.233) -- (-.5,0,.4) node[above left] {$\{{\color{mellowyellow}\bullet }, {\color{darkcandyapp}\bullet }\}$};
			\draw[darkgray, dashed]  (.33,0,.233) -- (.5,0,.4) node[above right] {$\{{\color{mellowyellow}\bullet }, {\color{darkblue}\bullet }\}$};
			\fill[fill=gray!10, opacity=.5] (t) -- (4) -- (1) -- cycle;
			\fill[fill=gray!10, opacity=.5] (b) -- (1) -- (2) -- cycle;
			\fill[fill=gray!10, opacity=.5] (b) -- (2) -- (3) -- cycle;
			\fill[fill=gray!10, opacity=.5] (b) -- (3) -- (4) -- cycle;
			\fill[fill=gray!10, opacity=.5] (b) -- (4) -- (1) -- cycle;
			\fill[fill=gray!10, opacity=.5] (t) -- (3) -- (4) -- cycle;
			\draw[darkcandyapp] (b) -- (1);
			\draw[darkcandyapp] (3) -- (b);
			\draw[mellowyellow] (1) -- (4);
			\draw[mellowyellow] (2) -- (3);
			\draw[mellowyellow] (3) -- (4);
			\draw[darkcandyapp] (3) -- (t);
			\draw[darkblue] (2) -- (b);
			\draw[darkblue] (4) -- (b);
			\draw[darkblue] (4) -- (t);
			\node (z) at (0,-.33,.233) {$\{{\color{mellowyellow}\bullet }\}$};
			\node (y) at (0,-.33,-.233) {$\varnothing$};
			\draw[darkgray, dashed]  (.33,0,-.233) -- (.5,0,-.4) node[below right] {$\{{\color{darkblue}\bullet }\}$};
			\draw[darkgray, dashed]  (-.33,0,-.233) -- (-.5,0,-.4) node[below left] {$\{{\color{darkcandyapp}\bullet }\}$};
		\end{tikzpicture}
		\quad
		\begin{tikzpicture}[scale=1.5, rotate around y=1, rotate around z=0, rotate around x=2]
			\coordinate (1o) at (-1,-1,0);
			\coordinate (2o) at (+1,-1,0);
			\coordinate (3o) at (+1,+1,0);
			\coordinate (4o) at (-1,+1,0);
			\coordinate (1i) at (-.5,-.5,0);
			\coordinate (2i) at (+.5,-.5,0);
			\coordinate (3i) at (+.5,+.5,0);
			\coordinate (4i) at (-.5,+.5,0);
			\coordinate (14i) at (-.5,.08,0);
			\coordinate (12i) at (.15,-.5,0);
			\coordinate (1ot) at (-1,-1,1);
			\coordinate (2ot) at (+1,-1,1);
			\coordinate (3ot) at (+1,+1,1);
			\coordinate (4ot) at (-1,+1,1);
			\coordinate (1it) at (-.5,-.5,1);
			\coordinate (2it) at (+.5,-.5,1);
			\coordinate (3it) at (+.5,+.5,1);
			\coordinate (4it) at (-.5,+.5,1);
			\draw[dashed]  (-1,0,.5) -- (-1.2,0,.5) node[left] {$\{{\color{mellowyellow}\bullet },{\color{darkblue}\bullet },{\color{darkgreen}\bullet } \}$};
			\draw[dashed] (.75,0,0) -- (.75,0,-.5) -- (1.5,0,-.5) node[right] {$\{{\color{mellowyellow}\bullet} ,{\color{darkcandyapp}\bullet } \}$};
			\fill[fill=gray!10, opacity=.5] (1i) -- (2i) -- (2o) -- (1o) -- cycle;
			\fill[fill=gray!10, opacity=.5] (1i) -- (4i) -- (4o) -- (1o) -- cycle;
			\fill[fill=gray!10, opacity=.5] (3i) -- (2i) -- (2o) -- (3o) -- cycle;
			\fill[fill=gray!10, opacity=.5] (3i) -- (4i) -- (4o) -- (3o) -- cycle;
			\draw[darkcandyapp, dashed] (4it) -- (4i) -- (4o);
			\draw[darkgreen, dashed] (1i) -- (1it) -- (1ot);
			\draw[darkcandyapp, dashed] (2o) -- (2i) -- (2it);
			\draw[darkgreen, dashed] (3o) -- (3i) -- (3it);
			\draw[mellowyellow, dashed] (14i) -- (4i) -- (3i) -- (2i) -- (12i);
			\draw[darkblue, dashed] (4o) -- (1o) -- (2o);
			\fill[fill=gray!10, opacity=.5] (1ot) -- (1o) -- (2o) -- (2ot) -- cycle;
			\fill[fill=gray!10, opacity=.5] (3ot) -- (3o) -- (2o) -- (2ot) -- cycle;
			\fill[fill=gray!10, opacity=.5] (1ot) -- (1o) -- (4o) -- (4ot) -- cycle;
			\fill[fill=gray!10, opacity=.5] (3ot) -- (3o) -- (4o) -- (4ot) -- cycle;
			\fill[fill=gray!10, opacity=.5] (1it) -- (1i) -- (2i) -- (2it) -- cycle;
			\fill[fill=gray!10, opacity=.5] (3it) -- (3i) -- (2i) -- (2it) -- cycle;
			\fill[fill=gray!10, opacity=.5] (1it) -- (1i) -- (4i) -- (4it) -- cycle;
			\fill[fill=gray!10, opacity=.5] (3it) -- (3i) -- (4i) -- (4it) -- cycle;
			\fill[fill=gray!10, opacity=.5] (1it) -- (2it) -- (2ot) -- (1ot) -- cycle;
			\fill[fill=gray!10, opacity=.5] (1it) -- (4it) -- (4ot) -- (1ot) -- cycle;
			\fill[fill=gray!10, opacity=.5] (3it) -- (2it) -- (2ot) -- (3ot) -- cycle;
			\fill[fill=gray!10, opacity=.5] (3it) -- (4it) -- (4ot) -- (3ot) -- cycle;
			\draw[darkblue] (2o) -- (3o) -- (4o);
			\draw[darkgreen] (3it) -- (3ot) -- (3o);
			\draw[mellowyellow] (14i) -- (1i) -- (12i);
			\draw[mellowyellow] (1ot) -- (2ot) -- (3ot) -- (4ot) -- cycle;
			\draw[darkblue] (1it) -- (2it) -- (3it) -- (4it) -- cycle;
			\draw[darkgreen] (1ot) -- (1o) -- (1i);
			\draw[darkcandyapp] (4it) -- (4ot) -- (4o);
			\draw[darkcandyapp] (2it) -- (2ot) -- (2o);
			\node (emp) at (0,-.5,.5) {$\varnothing$};
			\node (b) at (0,-.75,1) {$\{{\color{darkblue}\bullet } \}$};
			\draw[dashed] (-.56,0,.5) -- (-.5,0,.5) node[right] {$\{{\color{darkgreen}\bullet } \}$};
			\draw[dashed] (-.75,0,1) -- (-.75,0,1.5) -- (-.9,0,1.5) node[left] {$\{{\color{darkblue}\bullet },{\color{darkgreen}\bullet } \}$};
			\draw[dashed] (.75,0,1) -- (.75,0,1.5) -- (1.5,0,1.5) node[right] {$\{{\color{darkblue}\bullet },{\color{darkcandyapp}\bullet } \}$};
			\draw[dashed]  (1,0,.5) -- (1.5,0,.5) node[right] {$\{{\color{mellowyellow}\bullet },{\color{darkblue}\bullet },{\color{darkcandyapp}\bullet } \}$};
			\draw[dashed]  (0,1,.5) -- (0,1.5,.5) node[right] {$\{{\color{mellowyellow}\bullet },{\color{darkblue}\bullet },{\color{darkcandyapp}\bullet },{\color{darkgreen}\bullet} \}$};
			\node (rgb) at (0,.75,1) {$\{{\color{darkblue}\bullet } ,{\color{darkcandyapp}\bullet },{\color{darkgreen}\bullet }  \}$};
		\end{tikzpicture}
		\caption{The doubles $\mathsf{2}(\triangle)\cong \bbS^2$ and $\mathsf{2}(\square)\cong \bbT^2$, where $\triangle,\square$ denote a triangle and square, respectively, whose edges we denote ${\color{mellowyellow}\bullet },{\color{darkblue}\bullet },{\color{darkcandyapp}\bullet }, {\color{darkgreen}\bullet}$. The preimages $\pi^{-1}(\bullet)$ for $\bullet \in \{{\color{mellowyellow}\bullet },{\color{darkblue}\bullet },{\color{darkcandyapp}\bullet }, {\color{darkgreen}\bullet }\}$ have been colored in the same way as $\bullet$ itself. The various sheets of $\pi$ are labeled by the possible sets of faces, except for a handful which are hidden from view and left unlabeled to avoid cluttering the diagram.}
		\label{fig:octahedron}
	\end{center}
\end{figure}

We now discuss the topology of $\mathsf{2}(M)$, insofar as useful for providing a criterion for lifting continuous maps $\mathsf{2}(M)\to X$ to continuous maps 
\begin{equation}
\mathsf{2}(M) \to \widehat{X} 
\end{equation}
into the cover $\widehat{X} = \widetilde{X} / [\pi_1(X),\pi_1(X)]$ of $X$, whenever $X$ is a sufficiently nice topological space. This will allow us to construct the $\iota_{\ell,m,n}$ in \cref{eq:misc_14} indirectly, first by constructing a map $\mathsf{2}(A_{\ell,m,n}) \to \calM_N$ and then lifting.

Given a point $p\in M^\circ$, consider the sub-groupoid $\pi_1(\mathsf{2}(M), \Pi^{-1}(p))\subseteq \pi_1(\mathsf{2}(M))$ of homotopy classes $[\gamma:[0,1]\to \mathsf{2}(M)] \in \pi_1(\mathsf{2}(M))$ with $\{\gamma(0),\gamma(1)\}\subseteq \Pi^{-1}(p)$. This has a manageable set of generators, which is the only structural feature that we will need to know:
\begin{lemmap}
	Suppose that $M$ is contractible. For each $\mathrm{F}\in \calF(M)$, fix a continuous $\gamma_{\mathrm{F}} : [0,1] \to M$ such that $\gamma_{\mathrm{F}}(0) = p$, $\gamma_{\mathrm{F}}(1) \in \mathrm{F}^\circ$, and $\gamma_{\mathrm{F}}(t) \in M^\circ$ for all $t$ with $0<t<1$. For $\calF\subseteq \calF(M)$, let 
	\begin{equation}
		\gamma_{\calF,\mathrm{F}}(t) = 
		\begin{cases}
			[(\calF,\gamma_{\mathrm{F}}(2t)) ]_\sim & (0\leq t \leq 1/2), \\ 
			[(\calF \Delta \{\mathrm{F}\},\gamma_{\mathrm{F}}(2-2t)) ]_\sim & (1/2\leq t \leq 1).
		\end{cases} 
	\end{equation}
	Then, $\pi_1(\mathsf{2}(M), \Pi^{-1}(p))$ is generated by the classes $[\gamma_{\calF,\mathrm{F}}]$, and $[\gamma_{\calF,\mathrm{F}}]^{-1} = [\gamma_{\calF\Delta \{\mathrm{F}\},\mathrm{F}}]$. These classes do not depend on the choices of $\gamma_{\mathrm{F}}$. 
\end{lemmap}
The main part of the proof, whose details we omit, follows from the observation that any element of $\pi_1(\mathsf{2}(M), \Pi^{-1}(p))$ can be homotoped to one which avoids all $\Pi^{-1}(\mathrm{f})$ for $\mathrm{f}$ a codimension $\geq 2$ corner of $M$. In the examples depicted in \Cref{fig:octahedron}, the $\gamma_{\calF,\mathrm{F}}$ are paths going from the center of one sheet to the center of another. They are in bijection with the ``edges'' drawn.

As a corollary of the preceding lemma:
\begin{lemma}
	Let $G$ be a group. If 
	\begin{equation} 
		\Phi: \pi_1(\mathsf{2}(M), \Pi^{-1}(p)) \to G
	\end{equation} 
	is a map of groupoids such that $\Phi([\gamma_{\calF,\mathrm{F}}]) = \Phi([\gamma_{\calF',\mathrm{F}}])$ for all $\calF,\calF'\subseteq \calF(M)$ such that $\mathrm{F}\notin \calF\Delta \calF'$, then, for any $p_0\in \Pi^{-1}(p)$, the image of $\pi_1(\mathsf{2}(M),p_0) \subseteq  \pi_1(\mathsf{2}(M), \Pi^{-1}(p))$ under $\Phi$ is a subgroup of the commutator subgroup $[G,G] \subseteq G$. 
	\label{lem:lifting_lemma}
\end{lemma}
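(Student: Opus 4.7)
The approach is to reduce the claim to a short parity computation in the abelianization $G/[G,G]$, using the explicit generators furnished by the preceding lemma. First, I would unpack what the hypothesis says: since $\Phi([\gamma_{\calF,\mathrm{F}}])=\Phi([\gamma_{\calF',\mathrm{F}}])$ whenever $\mathrm{F}\notin \calF\Delta\calF'$, the value $\Phi([\gamma_{\calF,\mathrm{F}}])$ depends on $\calF$ only through the indicator $[\mathrm{F}\in \calF]$. Combined with the inversion identity $[\gamma_{\calF,\mathrm{F}}]^{-1}=[\gamma_{\calF\Delta\{\mathrm{F}\},\mathrm{F}}]$ from the previous lemma, this lets me unambiguously define, for each facet $\mathrm{F}\in \calF(M)$, an element $g_{\mathrm{F}}:=\Phi([\gamma_{\calF,\mathrm{F}}])$ for any $\calF$ with $\mathrm{F}\notin \calF$, with $\Phi([\gamma_{\calF,\mathrm{F}}])=g_{\mathrm{F}}^{-1}$ whenever $\mathrm{F}\in \calF$.

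Next, I would parametrize an arbitrary element of $\pi_1(\mathsf{2}(M),p_0)$, where $p_0=[(\calF_0,p)]_\sim$, as a composable product $[\gamma_{\calF_1,\mathrm{F}_1}]\cdots[\gamma_{\calF_k,\mathrm{F}_k}]$ of generators from the previous lemma. The composability and closed-loop constraints read $\calF_1=\calF_0$, $\calF_{i+1}=\calF_i\Delta\{\mathrm{F}_i\}$, and $\calF_k\Delta\{\mathrm{F}_k\}=\calF_0$; iterating yields $\{\mathrm{F}_1\}\Delta\cdots\Delta\{\mathrm{F}_k\}=\varnothing$, so every facet appears an even number of times in the list $\mathrm{F}_1,\ldots,\mathrm{F}_k$. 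Applying $\Phi$ then sends the loop to the word $g_{\mathrm{F}_1}^{\epsilon_1}\cdots g_{\mathrm{F}_k}^{\epsilon_k}\in G$, where $\epsilon_i=+1$ if $\mathrm{F}_i\notin \calF_i$ and $\epsilon_i=-1$ if $\mathrm{F}_i\in \calF_i$.

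To finish, I would show that this word maps to $0$ in the abelianization $G/[G,G]$, equivalently that $\sum_{i:\mathrm{F}_i=\mathrm{F}}\epsilon_i=0$ for each fixed $\mathrm{F}\in \calF(M)$. Enumerating the positions where $\mathrm{F}_i=\mathrm{F}$ as $i_1<\cdots<i_r$, the key combinatorial observation is two-fold: (a) between consecutive $i_j$'s the indicator $[\mathrm{F}\in \calF_i]$ is unchanged, since at those intermediate steps the $\calF_i$ change by $\Delta\{\mathrm{F}_i\}$ with $\mathrm{F}_i\neq \mathrm{F}$; and (b) at $i_j$ itself the indicator flips. Hence $\epsilon_{i_1},\epsilon_{i_2},\ldots,\epsilon_{i_r}$ strictly alternate in sign, and combined with $r$ even (from the preceding paragraph) their sum vanishes.

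I do not anticipate any serious obstacle; the whole argument is bookkeeping once the generators are in hand. The only mildly delicate point is the opening well-definedness of $g_{\mathrm{F}}$, which uses both the symmetry hypothesis on $\Phi$ and the inversion identity for $[\gamma_{\calF,\mathrm{F}}]$; the rest is a standard symmetric-difference parity argument.
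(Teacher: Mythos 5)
Your proposal is correct and follows essentially the same route as the paper: decompose the loop into a composable product of the generators $[\gamma_{\calF_j,\mathrm{F}_j}]$, pass to $G/[G,G]$, and use the even-occurrence constraint plus the alternation of the indicator $[\mathrm{F}\in\calF_i]$ to show the net exponent of each $g_{\mathrm{F}}$ vanishes. The paper states the alternation step a bit more tersely (``the definition of the $\calF_j$ implies that $e_{\mathrm{F}}, e'_{\mathrm{F}}$ are both equal to $2^{-1}e''_{\mathrm{F}}$'') whereas you spell it out explicitly, but the argument is the same.
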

\begin{proof}
	We have $p_0 = [(\calF_0,p)]_\sim$ for some $\calF_0 \subseteq \calF(M)$. 
	By the previous lemma, any $\gamma \in \pi_1(\mathsf{2}(M),p_0)$ can be written as a well-defined composition $\gamma =  [\gamma_{\calF_0,\mathrm{F}_0}]\cdots [\gamma_{\calF_N,\mathrm{F}_N}]$ for some sequence $\mathrm{F}_j \in \calF(M)$, where, in order for the composition to be well-defined, $\calF_j = \calF_{j-1}\Delta \{\mathrm{F}_{j-1}\}$ for each $j\in \{1,\cdots,N\}$. Applying $\Phi$ to $\gamma$ yields
	\begin{equation}
		\Phi(\gamma) = \prod_{j=0}^N \Phi( [\gamma_{\calF_j,\mathrm{F}_j}]),
	\end{equation}
	We want to show that $\Phi(\gamma) \in [G,G]$, which is equivalent to showing that the image of $\Phi(\gamma)$ in the abelianization $G/[G,G]$ is trivial. 
	By assumption, $\Phi([\gamma_{\calF,\mathrm{F}}])$ depends only on $\calF$ through whether or not $\mathrm{F}\in \calF$, and by the previous lemma the two possibilities are inverses. Thus, the image of $\Phi(\gamma)$ in $G/[G,G]$ can be written 
	\begin{equation}
		\prod_{\mathrm{F}\in \calF(M)}  \Phi([\gamma_{\calF_0,\mathrm{F}}]) ^{e_{\mathrm{F}} - e_{\mathrm{F}}'}   \bmod [G,G], 
		\label{eq:misc_024} 
	\end{equation}
	where $e_{\mathrm{F}}$ is the number of $j\in \{0,\cdots,N\}$ such that $\mathrm{F}_j=\mathrm{F}$ and $\mathrm{F}\notin \calF_0 \Delta \calF_j$ and $e_{\mathrm{F}}'$ is the number of $j\in \{0,\cdots,N\}$ such that $\mathrm{F}_j=\mathrm{F}$ and $\mathrm{F}\in \calF_0 \Delta \calF_j$. 
	
	In order for $\gamma$ to end up at $p_0$, we must have $\calF_N\Delta \{\mathrm{F}_N\}=\calF_0$, which means that every $\mathrm{F}\in \calF(M)$ must appear in the sequence $\mathrm{F}_0,\mathrm{F}_1,\cdots$ an even number $e_{\mathrm{F}}''$ of times, and the definition of the $\calF_j$ implies that $e_{\mathrm{F}},e_{\mathrm{F}}'$ are both equal to $2^{-1} e_{\mathrm{F}}''$. 
	
	So, the element of $G/[G,G]$ defined by \cref{eq:misc_024} is trivial. 
\end{proof}

Given any collection $\smash{\{\varrho_{\mathrm{F}} \}_{\mathrm{F}\in \calF(M)}} \subset C^\infty(M;\bbR^+)$ of boundary-defining-functions (bdfs) of the facets $\mathrm{F}\in \calF(M)$ and a system of compatible tubular neighborhoods thereof, one can define a smooth structure on $\mathsf{2}(M)$ in a manner generalizing the case when $M$ is a manifold-with-boundary. The smooth structure is independent of the choices made only in the weak sense that the resultant $C^\infty$-manifolds are guaranteed to be diffeomorphic; the smooth structures may be incompatible.
For instance, consider $\mathsf{2}([0,\infty)_{x+x^2})$. Since $[0,\infty)_{x+x^2} = [0,\infty)_x$ at the level of $C^0$-manifolds, $\mathsf{2}([0,\infty)_{x+x^2}) = \mathsf{2}([0,\infty)_x)$ at the level of $C^0$-manifolds. The latter is naturally identified with $\bbR_x$ at the level of smooth manifolds, via the map $\bbR_x\to \mathsf{2}([0,\infty)_x)$ sending $x\mapsto [(\varnothing,x)]_\sim$ if $x\geq 0$ and $x\mapsto [(\{0\},|x|)]_\sim$ otherwise.
So, 
\begin{equation} 
\mathsf{2}([0,\infty)_{x+x^2}) \cong \bbR_x
\end{equation} 
at the level of $C^0$-manifolds, and we can consider $x:\mathsf{2}([0,\infty)_{x+x^2}) \twoheadrightarrow \bbR$. But, this map is \emph{not} a diffeomorphism. 
At the level of smooth manifolds,
\begin{equation}
	\mathsf{2}([0,\infty)_{x+x^2} ) \cong \bbR_{y },
\end{equation}
where the diffeomorphism is given by $y=x+x^2$ if $x\geq 0$ and $y=x-x^2$ if $x\leq 0$. 
This has an incompatible smooth structure with $\bbR_x$, as for instance $y\notin C^2  (\bbR_x)$ but $y\in C^\infty(\bbR_y)$.

For $\mathsf{2}(A_{\ell,m,n})$, we will give an explicit smooth structure below. 

If $M$ is orientable, then so is $\mathsf{2}(M)$. Indeed, one can choose the orientation such that, for each $\calF\subseteq \calF(M)$, the diffeomorphism $\Pi:\{[(\calF,p)]_\sim:p\in M^\circ\}\to M^\circ$ is orientation preserving if $\calF$ contains evenly many elements and orientation reversing otherwise. This stipulation defines an orientation on $\mathsf{2}(M)\backslash \Pi^{-1}(\partial M)$, so the claim is that this orientation can be extended to $\Pi^{-1}(\partial M)$. 
Working in local coordinates, it can be seen that the orientations on the components of this dense submanifold are compatible, so the orientations can be extended and stitched together. For instance, pulling back the volume form $\mathrm{d} t_1\wedge \cdots \wedge \mathrm{d}t_N$ on $(0,\infty)^N_t$ via the map $\Pi_{\mathrm{Model}}:\bbR^N_\tau\to [0,\infty)^N_t$ given by $(\tau_1,\cdots,\tau_N)\mapsto (|\tau_1|,\cdots,|\tau_N|)$,
the result (away from the hyperplanes $\{\tau_j=0\}$, at each of which the pullback is undefined) is the form $\pm \mathrm{d} \tau_1\wedge \cdots \wedge  \mathrm{d} \tau_N$, where the sign is positive if an even number of the $\tau$'s are negative and negative otherwise. Therefore, $\Pi_{\mathrm{Model}}$ is orientation preserving on half of the $1/2^N$th-ants of $\bbR_\tau^N$ and orientation reversing on the rest. The case of general orientable $M$ can be reduced to this example via passage to local coordinate charts, the map $\Pi_{\mathrm{Model}}$ being a local model for $\Pi$.

\section{The structure of $\mathsf{2}(A_{\ell,m,n})$}

In this section, we discuss the structures of $A_{\ell,m,n}$ and of the double $\mathsf{2}(A_{\ell,m,n})$. In \cite{Sussman}, the $A_{\ell,m,n}$ were constructed by blowing up various boundary components (of various dimensions) of the $N=\ell+m+n$-cube $\square^{\ell,m,n} \cong \square^N$. This picture is worth keeping in mind, in part because it allows us to label the boundary hypersurfaces of $A_{\ell,m,n}$ by particular boundary components of the cube, the nonempty subsets of the form
\begin{equation}
\mathrm{f}_{\calS;x_0} = \{ (x_1,\ldots,x_N)\in \square^{\ell,m,n} : x_j = x_0\text{ whenever } j\in \calS \} \subsetneq \square^{\ell,m,n}
\end{equation}
for $x_0 \in \{0,1,\infty\}$ and $\varnothing\subsetneq \calS\subset \{1,\ldots,N\}$, with the caveat that we consider $-\infty=\infty$. Then, $A_{\ell,m,n}$ is formed by blowing up the $\mathrm{f}_{\calS;x_0}$ in order of increasing dimension. Then, the various boundary hypersurfaces of $A_{\ell,m,n}$ are denoted $\mathrm{F}_{\calS;x_0} \subset A_{\ell,m,n}$, with $\mathrm{F}_{\calS;x_0}$ being the subset of $A_{\ell,m,n}$ blowing down to $\mathrm{f}_{\calS;x_0}$.

In other words, letting $\calI_1=\{1,\cdots,\ell\}$, $\calI_2 = \{\ell+1,\cdots,\ell+m\}$, and $\calI_3=\{\ell+m+1,\cdots,N\}$.  the boundary hypersurfaces of $A_{\ell,m,n}$ are all of precisely one of the following three forms: $\mathrm{F}_{\calS;0}$ for $\calS\subseteq \calI_1 \cup \calI_2$, $\mathrm{F}_{\calS;1}$ for $\calS\subseteq\calI_2\cup \calI_3$, or $\mathrm{F}_{\calS;\infty}$ for $\calS\subseteq \calI_1\cup \calI_3$. The casework seen here will be repeated below, in various forms. Its origin is that the Dotsenko--Fateev integrands are typically singular when any of the $x_1,\ldots,x_N$ are $0$, $1$, or $\infty$, and each of these three cases is treated separately. 

We now exhibit an explicit atlas discussed in \cite[Appendix B]{Sussman}.
This atlas can be written in terms of functions $a_1,\cdots,a_N:\square^{\ell,m,n}_x\to [0,1]$  defined as follows:
for each $j\in \calI_1$, let $a_j = 1/(1-x_j)$, for each $j\in \calI_3$, let $a_j = (x_j-1)/x_j$, and for $j\in \calI_2$, just let $a_j = x_j$. Then, $(x_1,\cdots,x_N)\mapsto (a_1,\cdots,a_N)$ is a diffeomorphism 
\begin{equation} 
\square^{\ell,m,n}_x \cong \square_{a}^N.
\end{equation} 
For each triple $(\calS_1,\calS_2,\calS_3)$ of subsets $\calS_1\subseteq \calI_1$, $\calS_2\subseteq \calI_2$, and $\calS_3\subseteq \calI_3$, let 
\begin{equation}
	\xi_j[\calS_1,\calS_2,\calS_3] = 
	\begin{cases}
		a_j & (j\in \calS_1\cup \calS_2\cup \calS_3), \\ 
		1-a_j & (\text{otherwise}).
	\end{cases}
\end{equation}
We have $\square_{a}^N\cong \square_{\xi}^N$, and thus $\square^{\ell,m,n}_x\cong \square_\xi^N$. These coordinates are useful in describing the atlas near the lift in $A_{\ell,m,n}$ of the corner 
\begin{equation} 
\{a_j =0\text{ for }j\in \calS\text{ and }a_j=1\text{ for }j\in \calS^\complement\} \subset \square^N_a,
\end{equation} 
where $\calS=\calS_1\cup \calS_2\cup \calS_3$ and $\calS^\complement = \{1,\ldots,N\}\backslash \calS$. We come finally to coordinates used in the actual atlas: for each triple $(\sigma_\infty,\sigma_0,\sigma_1)$ of permutations 
\begin{align}
	\begin{split} 
		\sigma_\infty &: \calS_1 \cup \calS_3^\complement\to \calS_1\cup \calS_3^\complement,  \\ 
		\sigma_0 &: \calS_2 \cup \calS_1^\complement\to \calS_2\cup \calS_1^\complement, \\ 
		\sigma_1 &:  \calS_3\cup \calS_2^\complement \to \calS_3 \cup \calS_2^\complement, 
	\end{split} 
\end{align}
where $\calS_1^\complement = \calI_1\backslash \calS_1$, $\calS_2^\complement= \calI_2\backslash \calS_2$, and $\calS_3^\complement= \calI_3\backslash \calS_3$, consider, for each $j\in\{1,\cdots,N\}$, 
\begin{equation}
	\xi_j[\calS_1,\calS_2,\calS_3;\sigma_\infty,\sigma_0,\sigma_1](t_1,\cdots,t_N) = 
	\begin{cases}
		\prod_{k \in \calS_1\cup \calS_3^\complement , k\leq \sigma_\infty(j)} t_{k} & (j\in \calS_1\cup \calS_3^\complement), \\
		\prod_{k \in \calS_2\cup \calS_1^\complement, k\leq \sigma_0(j)}\; t_{k} & (j\in \calS_2\cup \calS_1^\complement), \\
		\prod_{k\in \calS_3\cup \calS_2^\complement, k\leq \sigma_1(j)}\; t_{k} & (j\in \calS_3\cup \calS_2^\complement).
	\end{cases}
\end{equation}
We explain how these yield an atlas.
Consider the map
\begin{align} 
	\begin{split} 
		\xi[\calS_1,\calS_2,\calS_3;\sigma_\infty,\sigma_0,\sigma_1]:&(0,\infty)^N_t\to (0,\infty)^N_\xi \\
		:&(t_1,\cdots,t_N) \mapsto (\xi_1[\calS_1,\calS_2,\calS_3;\sigma_\infty,\sigma_0,\sigma_1],\cdots,\xi_N[\calS_1,\calS_2,\calS_3;\sigma_\infty,\sigma_0,\sigma_1]).
	\end{split} 
	\label{eq:misc_02b}
\end{align} 
Inverting this, the  $t_1,\cdots,t_N$ are ratios of $\xi_1[\calS_1,\calS_2,\calS_3;\sigma_\infty,\sigma_0,\sigma_1],\cdots,\xi_N[\calS_1,\calS_2,\calS_3;\sigma_\infty,\sigma_0,\sigma_1]$. Let $U[\calS_1,\calS_2,\calS_2;\sigma_\infty,\sigma_0,\sigma_1]\subseteq [0,\infty)^N$ 
denote the relatively open subset of $[0,\infty)^N$ consisting of $(t_1,\cdots,t_N)\in [0,\infty)^N$ such that 
\begin{equation} 
	\xi_j[\calS_1,\calS_2,\calS_3;\sigma_\infty,\sigma_0,\sigma_1] < 1 
\end{equation}
for all $j\in \{1,\cdots,N\}$. 
Composing the map \cref{eq:misc_02b} with the diffeomorphism $\smash{\square_\xi^N \to \square_x^{\ell,m,n}}$, we get a smooth map $U[\calS_1,\calS_2,\calS_2;\sigma_\infty,\sigma_0,\sigma_1] \to \square^{\ell,m,n}_x$.
The key claim, the reason for which is given in \cite{Sussman}, is that this lifts via the blowdown map $\mathrm{bd}:A_{\ell,m,n}\to \square^{\ell,m,n}_x$ to an embedding
\begin{equation}
	X[\calS_1,\calS_2,\calS_3;\sigma_\infty,\sigma_0,\sigma_1]:U[\calS_1,\calS_2,\calS_2;\sigma_\infty,\sigma_0,\sigma_1] \to A_{\ell,m,n}. 
	\label{eq:misc_032}
\end{equation}
The collection $\calU$ of all the sets $U[\calS_1,\calS_2,\calS_3;\sigma_\infty,\sigma_0,\sigma_1]$ of the form above, as $\calS_1,\calS_2,\calS_3,\sigma_\infty,\sigma_0,\sigma_1$ vary, cover $A_{\ell,m,n}$. 
To see that the transition maps are smooth, and therefore that the maps above give a smooth atlas, it suffices to note that the transition maps 
\begin{equation}
\bbR^N_{t_1,\ldots,t_N} \supseteq U[\calS_1,\calS_2,\calS_2;\sigma_\infty,\sigma_0,\sigma_1] \to \bbR^N
\end{equation}
have components which are rational functions of $t_1,\cdots,t_N$, with the denominators nonvanishing on the relevant domains. So, the maps \cref{eq:misc_032} give a smooth atlas.

\begin{figure}
	\begin{center}
		\begin{tikzpicture}[scale = 2]
			\coordinate (origin) at (0,0,0);
			\coordinate (ul) at (0,1,.5);
			\coordinate (ur) at (.5,1,0);
			\coordinate (ru) at (1,.5,0);
			\coordinate (rl) at (1,0,.5);
			\coordinate (ll) at (.5,0,1);
			\coordinate (lu) at (0,.5,1);
			\coordinate (z) at (0,1,0);
			\coordinate (x) at (0,0,1);
			\coordinate (y) at (1,0,0);
			\filldraw[lightgray!10] (ul) -- (0,2,.5) -- (0,2,2) -- (0,.5,2.5) -- (lu) -- cycle;
			\filldraw[lightgray!10] (ur) -- (.5,2,0) -- (1.15,2,0) -- (2,.5,0) -- (ru) -- cycle;
			\filldraw[lightgray!10] (rl) -- (2,0,.5) -- (2,0,2) -- (.5,0,2.5) -- (ll) -- cycle;
			\filldraw[lightgray!10] (ul) -- (0,2,.5) -- (.5,2,0) -- (ur) -- cycle;
			\filldraw[lightgray!10] (ru) -- (2,.5,0) --  (2,0,.5) -- (rl) -- cycle;
			\filldraw[lightgray!10] (ll) -- (.5,0,2.5) --  (0,.5,2.5) -- (lu) -- cycle;
			\filldraw[fill=lightgray!10] (ul) -- (ur) -- (ru) -- (rl) -- (ll) -- (lu) -- cycle;
			\draw[->] (ul) -- (0,2,.5);
			\draw[->] (ur) -- (.5,2,0);
			\draw[->] (ru) -- (2,.5,0);
			\draw[->] (rl) -- (2,0,.5);
			\draw[->] (ll) -- (.5,0,2.5);
			\draw[->] (lu) -- (0,.5,2.5);
			\draw[darkcandyapp, ->] (0,1,.8) -- (0,1.4,.8) node[above left] {$z$};
			\draw[darkcandyapp, ->] (0,1,.8) -- (0,.8,1.025) node[left] {$x/z$};
			\draw[darkcandyapp, ->] (0,1,.8) -- (.25,1,.5) node[below] {$y/x$};
			\draw[darkblue, ->] (.6,1.1,0) -- (.6,1.4,0) node[right] {$z$}; 
			\draw[darkblue, ->] (.6,1.1,0) -- (.8,.9,0) node[above right] {$y/z$};
			\draw[darkblue, ->] (.6,1.1,0) -- (.4,1.15,.35) node[above] {$x/y$};
			\draw[darkgreen, ->] (1,.6,0) -- (1.3,.6,0) node[above right] {$y$};
			\draw[darkgreen, ->] (1,.6,0) -- (.825,.8,0) node[right] {$z/y$};
			\draw[darkgreen, ->] (1,.6,0) -- (1,.4,.2) node[above left] {$x/z$};
			\draw[darkgreen, ->] (.1,.5,1) -- (.1,.5,1.5) node[below] {$x$};
			\draw[darkgreen, ->] (.1,.5,1) -- (.1,.7,.825) node[right] {$z/x$};
			\draw[darkgreen, ->] (.1,.5,1) -- (.3,.3,1) node[above right] {$y/z$};
			\draw[darkcandyapp, ->] (1,0,.7) -- (1.3,0,.7) node[below] {$y$};
			\draw[darkcandyapp, ->] (1,0,.7) -- (.85,0,.85) node[below right] {$x/y$};
			\draw[darkcandyapp, ->] (1,0,.7) -- (1,.25,.45) node[left] {$z/x$};
			\draw[darkblue, ->] (.6,0,1.2) -- (.6,0,1.7) node[below] {$x$};
			\draw[darkblue, ->] (.6,0,1.2) -- (.75,0,1.05) node[below] {$y/x$};
			\draw[darkblue, ->] (.6,0,1.2) -- (.4,0.2,1.2) node[below left] {$z/y$};
		\end{tikzpicture}
	\end{center}
	\caption{The $|\frakS_3|=6$ coordinate systems $X[\varnothing,\{1,2,3\},\varnothing ; 1,\sigma,1]^{-1}$ near the $\{x_1,x_2,x_3=0\}$ corner of $A_{0,3,0}$ (say that hidden from view in \Cref{fig:A}). Here, $x=x_1$, $y=x_2$, and $z=x_3$, defined on $\smash{A_{0,3,0}^\circ} = (\square^{3})^\circ$. The origins of the coordinate systems have been depicted slightly shifted to aid readability. In each coordinate system, each of $t_1,t_2,t_3$ is a ratio of two of $1,x,y,z$.}
	\label{fig:coords}
\end{figure}
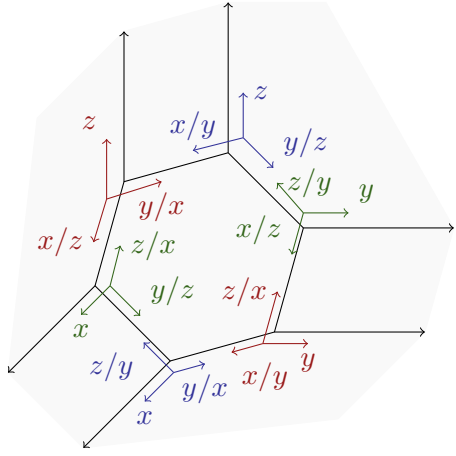

We have phrased this as a theorem, given the definition of $A_{\ell,m,n}$ via iterated blowups, but it can also be taken as a \emph{definition} of the $A_{\ell,m,n}$. Since this avoids the need to use the argument in \cite[Appendix B]{Sussman}, and since we make no use of the blowup construction below (only the blowdown map $\mathrm{bd} : A_{\ell,m,n} \to \square^{\ell,m,n}_x$, which can be defined explicitly in terms of coordinate charts) this is actually more convenient for our purposes. 

See \Cref{fig:coords} for an illustration in the $\ell=0,m=3,n=0$ case, focusing on a neighborhood of $\{x_1,x_2,x_3=0\}$, so that $\xi_j = x_j$. Consider the identity permutation $\sigma_0 = 1$. Then, the 
\begin{equation} 
x_j=\xi_j[\varnothing,\{1,2,3\},\varnothing; 1,1,1] =\xi_j[\calS_1,\calS_2,\calS_3;\sigma_\infty,\sigma_0,\sigma_1]
\end{equation} 
are given by $x_1 = t_1$, $x_2 = t_1 t_2$, and $x_3 = t_1t_2t_3$. In other words, $t_1=x_1$, $t_2 = x_2/x_1$, and $t_3 = x_3/x_2$ serve as local coordinates. Replacing $\sigma_0$ with another permutation results in permuting $x_1,x_2,x_3$ in the formulas for $t_1,t_2,t_3$. 

In the coordinate system $X[\calS_1,\calS_2,\calS_3;\sigma_\infty,\sigma_0,\sigma_1]$, the subset $\{t_k=0\} \subset [0,\infty)^N_t$ is mapped to a subset of a unique boundary hypersurface of $A_{\ell,m,n}$, namely 
\begin{equation} 
	\mathrm{F}[\calS_1,\calS_2,\calS_3;\sigma_\infty,\sigma_0,\sigma_1;k] =  \mathrm{F}_{\calS; x_0} \in \calF(A_{\ell,m,n}),
\end{equation} 
where $x_0 = \infty$ if $k\in \calS_1\cup \calS_3^\complement$, $x_0=0$ if $k\in \calS_2\cup \calS_1^\complement$, and $x_0=1$ if $k\in \calS_3\cup \calS_2^\complement$, and where $\calS$ is the set of all $j$ in the domain of $\sigma_{x_0}$  such that $k\leq \sigma_{x_0}(j)$. For instance, in the coordinate system $\xi_j[\varnothing,\{1,2,3\},\varnothing; 1,1,1]$ worked out above, $\{t_1=0\} = \{x_1,x_2,x_3=0\}$, $\{t_2 = 0\} = \{x_2,x_3=0\}$, and $\{t_3 = 0 \} = \{x_3=0\}$.

In the same way that the coordinates $t_1,\cdots,t_N$ on $[0,\infty)^N_{t_1,\cdots,t_N}$ extend to coordinates on 
\begin{equation} 
\mathsf{2}([0,\infty)^N_{t_1,\cdots,t_N})\cong \bbR_{t_1,\cdots,t_N}^N,
\end{equation} 
the coordinate systems above can be extended to coordinate systems on $\mathsf{2}(A_{\ell,m,n})$. Consider the open subset 
\begin{equation} 
	V[\calS_1,\calS_2,\calS_3;\sigma_\infty,\sigma_0,\sigma_1] = \{(t_1,\cdots,t_N)\in \bbR^N : (|t_1|,\cdots,|t_N|) \in U[\calS_1,\calS_2,\calS_3;\sigma_\infty,\sigma_0,\sigma_1]\}.
\end{equation} 
Then, for each subset $\calF\subseteq \calF(A_{\ell,m,n})$, consider 
\begin{equation} 
	\widehat{X}[\calS_1,\calS_2,\calS_3;\sigma_\infty,\sigma_0,\sigma_1;\calF]: 	V[\calS_1,\calS_2,\calS_3;\sigma_\infty,\sigma_0,\sigma_1]  \to \mathsf{2}(A_{\ell,m,n})
	\label{eq:Psi_coord}
\end{equation} 
defined by 
\begin{equation}
	(t_1,\cdots,t_N) \mapsto [(\calF\Delta \calG(t_1,\cdots,t_N), X[\calS_1,\calS_2,\calS_3;\sigma_\infty,\sigma_0,\sigma_1](|t_1|,\cdots,|t_N|))]_\sim, 
\end{equation}
where 
\begin{equation}
	\calG(t_1,\cdots,t_N) = \{\mathrm{F}[\calS_1,\calS_2,\calS_3;\sigma_\infty,\sigma_0,\sigma_1;k] : t_k<0\}.
\end{equation}
These define homeomorphisms onto their images and serve as a $C^0$-atlas for $\mathsf{2}(A_{\ell,m,n})$. Moreover, they serve as a $C^\infty$-atlas. In order to see this, it must be checked that the various transition maps are smooth. Like the transition maps for the atlas for $A_{\ell,m,n}$ discussed above, these transition maps are rational functions of the $t_1,\cdots,t_N$ with the denominators are nonvanishing on the relevant domains. They are therefore smooth.

Important submanifolds of $A_{\ell,m,n}$ are
\begin{equation}
	H_{j,k}  = \mathrm{cl}_{A_{\ell,m,n}} \{ (x_1,\cdots,x_N) \in \square^{\ell,m,n\circ }_x : x_j=x_j \}
\end{equation}
for distinct $j,k\in \calI_1$, distinct $j,k\in \calI_2$, or distinct $j,k\in \calI_3$. Their importance stems from the fact that they are the loci of the singularities of the Dotsenko--Fateev integrand $\omega$ remaining \emph{in the interior} of $A_{\ell,m,n}$. The key idea exploited here, besides those already in \cite{KT1, KT2}, is handling these singularities using a different technique than the singularities on the boundary.  

The conceptual reason why it is easier to analyze the interior singularities on $A_{\ell,m,n}$ than on $\smash{\square^{\ell,m,n}}$ is because the $H_{j,k}$ are \emph{interior p-submanifolds}, a concept due to Melrose. Roughly, this means that they locally look like subsets of $[0,\infty)^K \times \bbR^J$ of the form $[0,\infty)^K \times \bbR^{J-\kappa}\times \{0\}^{\kappa}$, for some $K,J,\kappa$, and in this case $\kappa=1$. The claim can be seen in the $N=3$ case in \Cref{fig:interior_singularities}. The ur-example of a submanifold that fails to be a p-submanifold is the diagonal $\{x_1=x_2\} \subset \square^2$, or more generally the blowdowns of the $H_{j,k}$ in the $\square^{\ell,m,n}$. To be clear, this conceptual point-of-view is not necessary below, but it explain why the analysis goes through.

Consider the $C^0$-submanifolds $\Pi^{-1}(H_{j,k})$ of $\mathsf{2}(A_{\ell,m,n})$. These are smooth submanifolds. Since $\Pi$ is a covering map of $A_{\ell,m,n}^\circ$, this only needs to be checked near $\Pi^{-1}(\partial H_{j,k})$. One can use the local coordinate system above: given $j,k$ either both in $\calS_1\cup \calS_2\cup \calS_3$ or both not, then 
\begin{equation}
	\Pi^{-1}(H_{j,k}) = \widehat{X}[\calS_1,\calS_2,\calS_3;\sigma_\infty,\sigma_0,\sigma_1;\calF]^{-1}\Big(\Big\{ (t_1,\cdots,t_N)\in V : \prod_{s\in \calS} t_s^2 = 1\Big\} \Big)
\end{equation}
locally, for some subset $\calS \subseteq \{1,\cdots,N\}$ depending on $\sigma_\infty,\sigma_1,\sigma_0$ and on $j,k$. The right-hand side defines a smooth submanifold of $\bbR^N_t$. These coordinate charts cover $\Pi^{-1}(\partial H_{j,k})$.

\begin{figure}[t]
	\begin{center} 
		\tdplotsetmaincoords{75}{115}
		\begin{tikzpicture}[scale=2.5,tdplot_main_coords]
			\draw[opacity=0] (1.5,0,0) -- (0,1.5,0) -- (0,0,1.5); 
			\draw[fill=darkgreen, opacity=.3, dashed] (0,.05,.05) -- (0,.9,.9) -- (1,.9,.9) -- (1,.15,.15) -- (.8,.05,.05) -- cycle;%
			\draw[fill=gray,fill opacity = .1] (1,.2,.1) -- (1,.1,.2) -- (1,.1,1) -- (1,.8,1) -- (1,1,.8) -- (1,1,.1) -- cycle;%
			\draw[fill=gray,fill opacity = .1] (1,1,.1) -- (.9,1,0) -- (0,1,0) -- (0,1,.8) -- (1,1,.8) -- cycle;%
			\draw[fill=gray,fill opacity = .1] (1,.8,1) -- (0,.8,1) -- (0,0,1) -- (.9,0,1) -- (1,.1,1) -- cycle;%
			\draw[fill=gray,fill opacity = .1] (1,.8,1) -- (0,.8,1) -- (0,1,.8) -- (1,1,.8) -- cycle;%
			\draw[fill=gray,fill opacity = .1] (1,1,.1) -- (.9,1,0) -- (.9,.2,0) -- (1,.2,.1) -- cycle;%
			\draw[fill=gray,fill opacity = .1] (1,.1,1) -- (.9,0,1) -- (.9,0,.2) -- (1,.1,.2) -- cycle;%
			\draw[fill=gray,fill opacity = .1] (1,.2,.1) -- (1,.1,.2) -- (.9,0,.2) -- (.8,0,.1) -- (.8,.1,0) -- (.9,.2,0) -- cycle;%
			\draw[fill=gray,fill opacity = .1, draw=none] (0,.1,0) -- (0,0,.1) -- (.8,0,.1) -- (.8,.1,0) -- cycle;
			\draw[fill=gray,fill opacity = .1, dashed] (.9,.2,0) -- (.9,1,0) -- (0,1,0) -- (0,.1,0) -- (.8,.1,0) -- cycle;%
			\draw[fill=gray,fill opacity = .1, dashed] (.8,0,.1) -- (.9,0,.2) -- (.9,0,1) -- (0,0,1) -- (0,0,.1) -- cycle;%
			\draw[fill=gray,fill opacity = .1, dashed] (0,0,.1) -- (0,.1,0) -- (0,1,0) -- (0,1,.8) -- (0,.8,1) -- (0,0,1) -- cycle;%
			\node (lb) at (0,.2,1.2) {$A_{1,2,0}$};
		\end{tikzpicture}
		\begin{tikzpicture}[scale=2.5,tdplot_main_coords]
			\draw[opacity=0] (1.5,0,0) -- (0,1.5,0) -- (0,0,1.5); 
			\draw[fill=darkgreen, opacity=.3, dashed] (1,.85,.85) -- (.8,.95,.95) -- (0,.95,.95) -- (0,.15,.15) -- (.2,.05,.05)-- (1,.05,.05) -- cycle;%
			\draw[fill=darkcandyapp, opacity=.3, dashed] (.85,.85,1) -- (.95,.95,.8) -- (.95,.95,0) -- (.15,.15,0) -- (.05,.05,.2)-- (.05,.05,1) -- cycle;%
			\draw[fill=darkblue, opacity=.3, dashed] (.85,1,.85) -- (.95,.8,.95) -- (.95,0,.95) -- (.15,0,.15) -- (.05,.2,.05)-- (.05,1,.05) -- cycle;%
			\draw[fill=gray,fill opacity = .1] (1,.9,.8) -- (1,.8,.9) -- (.9,.8,1) -- (.8,.9,1) -- (.8,1,.9) -- (.9,1,.8) -- cycle;%
			\draw[fill=gray,fill opacity = .1] (1,.9,.8) -- (.9,1,.8) -- (.9,1,0) -- (1,.9,0) -- cycle;%
			\draw[fill=gray,fill opacity = .1] (1,.8,.9) -- (.9,.8,1) -- (.9,0,1) -- (1,0,.9) -- cycle;%
			\draw[fill=gray,fill opacity = .1] (.8,.9,1) -- (.8,1,.9) -- (0,1,.9) -- (0,.9,1) -- cycle;%
			\draw[fill=gray,fill opacity = .1] (.8,1,.9) -- (.9,1,.8) -- (.9,1,0) -- (.1,1,0) -- (0,1,.1) -- (0,1,.9) -- cycle;%
			\draw[fill=gray,fill opacity = .1] (1,.9,.8) -- (1,.8,.9) -- (1,0,.9) -- (1,0,.1) -- (1,.1,0) -- (1,.9,0) -- cycle;%
			\draw[fill=gray,fill opacity = .1] (.9,.8,1) -- (.8,.9,1) -- (0,.9,1) -- (0,.1,1) -- (.1,0,1) -- (.9,0,1) -- cycle;%
			\draw[fill=gray,fill opacity = .1,dashed] (0,.1,.2) -- (0,.2,.1) -- (.1,.2,0) -- (.2,.1,0) -- (.2,0,.1) -- (.1,0,.2) -- cycle;%
			\draw[fill=gray,fill opacity = .1,draw=none] (0,.1,.2) -- (.1,0,.2) -- (.1,0,1) -- (0,.1,1) -- cycle;%
			\draw[fill=gray,fill opacity = .1,draw=none] (0,.2,.1) -- (.1,.2,0) -- (.1,1,0) -- (0,1,.1) -- cycle;%
			\draw[fill=gray,fill opacity = .1,draw=none] (.2,.1,0) -- (.2,0,.1) -- (1,0,.1) -- (1,.1,0) -- cycle;%
			\draw[fill=gray,fill opacity = .1,dashed] (.2,0,.1) -- (.1,0,.2) -- (.1,0,1) -- (.9,0,1) -- (1,0,.9) -- (1,0,.1) -- cycle;%
			\draw[fill=gray,fill opacity = .1,dashed] (0,.1,.2) -- (0,.2,.1) -- (0,1,.1) -- (0,1,.9) -- (0,.9,1) -- (0,.1,1) -- cycle;%
			\draw[fill=gray,fill opacity = .1,dashed] (.1,.2,0) -- (.2,.1,0) -- (1,.1,0) -- (1,.9,0) -- (.9,1,0) -- (.1,1,0) -- cycle;%
			\node (lb) at (0,.2,1.2) {$A_{0,3,0}$}; %
		\end{tikzpicture}
	\end{center}
	\caption{The sets $H_{j,k}$ in $A_{1,2,0}$ and $A_{0,3,0}$. Pictured are $\color{darkcandyapp} H_{1,2}$, $\color{darkblue} H_{1,3}$, and $\color{darkgreen} H_{2,3}$. }
	\label{fig:interior_singularities}
\end{figure}


For each $\epsilon \in (0,1/3)$, let $\psi_\epsilon \in C^\infty(\bbR^{\geq 0})$ be some monotonic function such that $\psi_\epsilon(t) = 1-2\epsilon$ if $t\geq 1-\epsilon$ and $\psi_\epsilon(t) = t$ if $t \leq  1-3\epsilon$ and such that 
\begin{equation} 
	\psi_\epsilon(t)\leq t
	\label{eq:misc_975}
\end{equation} 
for all $t\geq 0$. 
Consider the functions $\varrho_{\mathrm{F},\epsilon}\in C^\infty((\square_x^{\ell,m,n})^\circ;\bbR^{\geq 0})$ given by 
\begin{align}
	\begin{split} 
		\varrho_{\mathrm{F}_{\calS;0} ,\epsilon} &= \Big[\prod_{\calS\subseteq \calS_0 \subseteq \calI_1 \cup \calI_2} \Big[ \sum_{j\in \calS_0 \cap \calI_1} \psi_\epsilon(1-a_j)^2 + \sum_{j\in \calS_0\cap \calI_2} \psi_\epsilon(a_j)^2   \Big]^{(-1)^{|\calS|- |\calS_0|} } \Big]^{1/2},  \\
		\varrho_{\mathrm{F}_{\calS;1},\epsilon } &= \Big[\prod_{\calS\subseteq \calS_0 \subseteq \calI_2 \cup \calI_3} \Big[ \sum_{j\in \calS_0 \cap \calI_2} \psi_\epsilon(1-a_j)^2 + \sum_{j\in \calS_0\cap \calI_3} \psi_\epsilon (a_j)^2   \Big]^{(-1)^{|\calS|- |\calS_0|} } \Big]^{1/2},  \\
		\varrho_{\mathrm{F}_{\calS;\infty},\epsilon } &= \Big[\prod_{\calS\subseteq \calS_0 \subseteq \calI_3\cup \calI_1} \Big[ \sum_{j\in \calS_0 \cap \calI_3} \psi_\epsilon(1-a_j)^2 + \sum_{j\in \calS_0\cap \calI_1} \psi_\epsilon(a_j)^2   \Big]^{(-1)^{|\calS|- |\calS_0|} }\Big]^{1/2}.
	\end{split} 
	\label{eq:misc_029}
\end{align}
As can be seen in local coordinates, $\varrho_{\mathrm{F},\epsilon} \in C^\infty(A_{\ell,m,n}; \bbR^{\geq 0})$ for each $\mathrm{F}\in \calF(A_{\ell,m,n})$. The reason why $\psi_\epsilon$ is required for smoothness is that 
\begin{equation} 
1-|t| \notin C^\infty(\bbR_t),
\end{equation} 
but $\psi_\epsilon(1-|t|) \in C^\infty[-1,+1]$. 
As suggested by the notation, $\varrho_{\mathrm{F},\epsilon}$ is a bdf for $\mathrm{F}$. 
This can be seen directly in local coordinates, but it is also possible to prove using the inductive construction of $A_{\ell,m,n}$ via blowups. The argument is that in \cite[Appendix B]{Sussman}, for which it suffices to note that $\psi_\epsilon(a_j)$ is a bdf of the boundary hypersurface of $\square^{\ell,m,n}_x$ at which it vanishes.

If the reader is only interested in $C^0$-structure, then $\epsilon$ can be taken $\to 0^+$ without ill effects. In this limit, $\psi_\epsilon(t)$ is replaced by $t$ whenever $t\leq 1$. This is the case when plugging in $t= 1-a_j$ and $t=a_j$, as in \cref{eq:misc_029}.

The $\varrho_{\mathrm{F},\epsilon}$ satisfy several useful algebraic relations, among which are 
\begin{equation}
	\psi_\epsilon(a_j) = 
	\begin{cases}
		\prod_{j\in \calS\subseteq \calI_1\cup \calI_3} \varrho_{\mathrm{F}_{\calS;\infty},\epsilon} & (j\in \calI_1), \\ 
		\prod_{j\in \calS\subseteq \calI_2\cup \calI_1} \varrho_{\mathrm{F}_{\calS;0},\epsilon} & (j\in \calI_2), \\
		\prod_{j\in \calS\subseteq \calI_3\cup \calI_2} \varrho_{\mathrm{F}_{\calS;1},\epsilon} & (j\in \calI_3), 
	\end{cases} 
	\quad 
	\psi_\epsilon(1-a_j) = 
	\begin{cases}
		\prod_{j\in \calS\subseteq \calI_2\cup \calI_1} \varrho_{\mathrm{F}_{\calS;0},\epsilon} & (j\in \calI_1), \\ 
		\prod_{j\in \calS\subseteq \calI_3\cup \calI_2} \varrho_{\mathrm{F}_{\calS;1},\epsilon} & (j\in \calI_2), \\
		\prod_{j\in \calS\subseteq \calI_1\cup \calI_3} \varrho_{\mathrm{F}_{\calS;\infty},\epsilon} & (j\in \calI_3), 
	\end{cases}
	\label{eq:product_identity}
\end{equation}
which give a simple algebraic way of recovering $\psi_\epsilon(a_j),\psi_\epsilon(1-a_j)$ given the values of the $\varrho_{\mathrm{F},\epsilon}$.

Extend each $\varrho_{\mathrm{F},\epsilon}$ to a map $\tilde{\varrho}_{\mathrm{F},\epsilon}: \mathsf{2}(A_{\ell,m,n}) \to \bbR$ by 
\begin{equation}
	\tilde{\varrho}_{\mathrm{F},\epsilon}([(\calF,p)]_\sim ) = 
	\begin{cases}
		\varrho_{\mathrm{F},\epsilon}(p) & (\mathrm{F}\notin \calF), \\
		-\varrho_{\mathrm{F},\epsilon}(p) & (\mathrm{F} \in \calF). 
	\end{cases}
	\label{eq:misc_h64}
\end{equation}
This is certainly a well-defined continuous function on $\mathsf{2}(A_{\ell,m,n})$. Well-definedness means that the right-hand side does not depend on the element of $[(\calF,p)]_\sim$ picked. But, $[(\calF,p)]_\sim$ is a singleton unless $\varrho_{\mathrm{F},\epsilon}(p)=0$, in which case 
\begin{equation}
\varrho_{\mathrm{F},\epsilon}(p) = -\varrho_{\mathrm{F},\epsilon}(p).
\end{equation}
Less obvious, but not necessary for the proof of the main theorem, is the fact that \cref{eq:misc_h64} defines a smooth function on $\mathsf{2}(A_{\ell,m,n})$. As elsewhere in this section, this can be checked in the given local coordinate charts.

\section{Main construction}

The following functions are the basic building blocks of our multi-contours: 
for each $\delta>0$ and $\varepsilon\in (0,\delta/2)$, let $P_{\delta,\varepsilon}\in C^\infty(\bbR;\bbC)$ be defined by 
\begin{equation}
	P_{\delta,\varepsilon}(r) = \Theta_{\mathrm{reg}}\Big(\frac{|r|-\delta}{\varepsilon} \Big) |r| + \Big(1- \Theta_{\mathrm{reg}}\Big(\frac{|r|-\delta}{\varepsilon} \Big) \Big) \delta e^{\pi i(1-r/\delta)}, 
\end{equation}
where $\Theta_{\mathrm{reg}}\in C^\infty(\bbR;[0,1])$ is any mollified version of the Heaviside function $\Theta$ such that $\Theta_{\mathrm{reg}}(t)=0$ for $t\leq -2$ and $\Theta_{\mathrm{reg}}(t)=1$ for $t\geq 0$. 
For instance, we can take 
\begin{equation}
	\Theta_{\mathrm{reg}}(t) = \Big(\int_{-1}^{+1} e^{-1/(1-s^2)} \dd s \Big)^{-1}\int_{-1}^{\min\{t+1,1\}} e^{-1/(1-s^2)} \dd s. 
	\label{eq:misc_040}
\end{equation}
We have $\lim_{\varepsilon\to 0^+} P_{\delta,\varepsilon}(r)=P_{\delta}(r)$, where 
\begin{equation}
	P_{\delta}(r) = 
	\begin{cases}
		|r| & (|r|\geq \delta), \\ 
		\delta e^{\pi i (1 - r/\delta)} & (|r|\leq\delta).
	\end{cases}
\end{equation}
The ''$P$'' stands for Pochhammer. 

Note that $P_{\delta,\varepsilon}$ is nonvanishing, and as $r$ decreases from $\delta$ to $-\delta$, $P_{\delta,\varepsilon}(r)$ winds around the origin once counter-clockwise.

Since $P_{\delta,\varepsilon}(r)$ is a convex combination of complex numbers of magnitude $|r|$ and $\delta$, 
\begin{equation} 
	|P_{\delta,\varepsilon}(r)| \leq \max\{|r|,\delta\}
	\label{eq:Pinq}
\end{equation} 
for all $r\in \bbR$. 

\begin{lemma}
	If $M$ is a compact manifold and $f\in C^0(M;\bbR)$, then $\lim_{\delta\to 0^+} \sup_{\varepsilon \in (0,\delta/2)} \lVert |f|- P_{\delta,\varepsilon}\circ f \rVert_{C^0(M)}=0$. 
	\label{lem:conv_lemma}
\end{lemma}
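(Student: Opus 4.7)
The plan is to prove a stronger statement from which the lemma is immediate: the pointwise bound
\begin{equation}
\big||r|-P_{\delta,\varepsilon}(r)\big|\leq 2\delta \qquad (r\in\bbR,\; \delta>0,\; \varepsilon\in (0,\delta/2))
\end{equation}
holds uniformly. Because this is uniform in $r$, we then get $\lVert |f|-P_{\delta,\varepsilon}\circ f\rVert_{C^0(M)}\leq 2\delta$ for any set $M$ and any function $f$, which converges to $0$ as $\delta\to 0^+$ and in particular is independent of $\varepsilon$. Neither compactness of $M$ nor boundedness of $f$ is actually used.

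The argument splits into two cases according to whether $|r|\geq\delta$ or $|r|<\delta$. In the first case, the rescaled argument $(|r|-\delta)/\varepsilon$ is non-negative, so the defining property $\Theta_{\mathrm{reg}}(t)=1$ for $t\geq 0$ forces $\Theta_{\mathrm{reg}}((|r|-\delta)/\varepsilon)=1$, and the formula for $P_{\delta,\varepsilon}$ collapses to $P_{\delta,\varepsilon}(r)=|r|$. The difference vanishes identically on this region, so no estimate is needed.

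In the second case, when $|r|<\delta$, invoke the already-established inequality \cref{eq:Pinq}, which yields $|P_{\delta,\varepsilon}(r)|\leq \max\{|r|,\delta\}=\delta$. The triangle inequality then gives
\begin{equation}
\big||r|-P_{\delta,\varepsilon}(r)\big| \leq |r|+|P_{\delta,\varepsilon}(r)| \leq \delta+\delta = 2\delta.
\end{equation}
Combining the two cases proves the pointwise bound, from which the lemma follows. The estimate is crude but optimal up to the constant, since the branch $\delta e^{\pi i (1-r/\delta)}$ traverses a circle of radius $\delta$ around the origin as $r$ varies across $(-\delta,\delta)$, so for $r$ near $0$ the value $P_{\delta,\varepsilon}(r)$ genuinely sits at distance close to $\delta$ from $|r|=0$. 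There is no real obstacle here: the delicate part of this paper is the construction of $\iota_{\ell,m,n}$ and the control of the resulting integrals, and this lemma is merely a preparatory smoothing fact ensuring that $P_{\delta,\varepsilon}\circ f$ converges uniformly to $|f|$ in the regularization limit.
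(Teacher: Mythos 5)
Your proof is correct and takes essentially the same approach as the paper: both arguments split into the region where $|r|$ is large enough that $P_{\delta,\varepsilon}(r)=|r|$ exactly, and the complementary region where \cref{eq:Pinq} plus the triangle inequality gives an $O(\delta)$ bound. You use the sharper threshold $|r|\geq\delta$ in place of the paper's $|r|\geq 3\delta/2$, obtaining $2\delta$ instead of $3\delta$, and you correctly observe that compactness of $M$ is never used; these are cosmetic improvements, not a different argument.
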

\begin{proof}
	For each $\delta$, let $S_1 = \{p\in M: |f| < 3\delta/2 \}$ and $S_2 = \{p\in M: |f|\geq 3\delta/2\}$. Then, $M=S_1\sqcup S_2$. On $S_2$, $|f|-P_{\delta,\varepsilon}\circ f$ vanishes identically. On $S_1$,   $||f| - P_{\delta,\varepsilon}\circ f| < 3\delta$, via \cref{eq:Pinq}. 
\end{proof}

Thus, if $f\in C^K(M;\bbR)$, then $P_{\delta,\varepsilon}\circ f \in C^K(M)$ is a \emph{nonvanishing}, complex-valued  approximation to $|f|$ in the $C^0(M)$-norm.

Our next goal is to define a preliminary multicontour 
\begin{equation} 
Z^\circ:\mathsf{2}(A_{\ell,m,n})\to \bbC^N,
\label{eq:misc_058}
\end{equation} 
which, while not landing in $\calM_N$, lands in $(\bbC\backslash \{0,1\})^N$. So, the components of the to-be-defined map are not necessarily always distinct, but they do avoid $0,1$.
Defining such a contour is not difficult to do --- a product of Pochhammer contours would have the same property --- but the particular $Z^\circ$ defined below has the advantage of being perturbable to a multicontour $Z:\mathsf{2}(A_{\ell,m,n})\to \calM_N$. The `$\circ$' appearing as a superscript below signals preliminarity.

Just as in the previous section it was easier to work with the coordinates $a = (a_1,\ldots,a_N)$ than the coordinates $z\in \bbC^N$, where  $a_j$ and $z_j$ were related by a linear fractional transformation, it will be easier to define the desired map $Z^\circ$ by first defining a map
\begin{equation} 
	A^\circ : \mathsf{2}(A_{\ell,m,n})\to \bbC^N
	\label{eq:misc_059a}
\end{equation}
whose components are to be related to those of $Z^\circ$ by the same linear fractional transformations as $a$ to $z$.

The components of $A^\circ$ are defined by explicit formulas in large neighborhoods of the lifts $(\mathrm{bd}\circ \Pi )^{-1}(\mathrm{c})$ of the $2^N$ corners $\mathrm{c}\in \square^{\ell,m,n}$, which are labeled by triples $(\calS_1,\calS_2,\calS_3)$ of subsets $\calS_1\subseteq \calI_1$, $\calS_2\subseteq \calI_2$, and $\calS_3\subseteq \calI_3$. For each such triple, and for each $j\in \{1,\cdots,N\}$, $\epsilon\in (0,1/3)$, $\delta>0$, and $\varepsilon \in (0,\delta/2)$, define a function $A_j^\circ[\calS_1,\calS_2,\calS_3,\epsilon,\delta,\varepsilon] : \mathsf{2}(A_{\ell,m,n}) \to \bbC$ as follows: 
\begin{equation}
	A_j^\circ[\calS_1,\calS_2,\calS_3,\epsilon,\delta,\varepsilon] = 
	\begin{cases}
		\prod_{j\in \calS \subseteq \calI_1\cup \calI_3} P_{\delta,\varepsilon}(\tilde{\varrho}_{\mathrm{F}_{\calS;\infty},\epsilon}) & (j\in \calS_1), \\ 
		\prod_{j\in \calS \subseteq \calI_2\cup \calI_1} P_{\delta,\varepsilon}(\tilde{\varrho}_{\mathrm{F}_{\calS;0},\epsilon}) & (j\in \calS_2), \\   
		\prod_{j\in \calS \subseteq \calI_3\cup \calI_2} P_{\delta,\varepsilon}(\tilde{\varrho}_{\mathrm{F}_{\calS;1},\epsilon}) & (j\in \calS_3)
	\end{cases}
\end{equation}
if $j\in \calS_1\cup \calS_2\cup \calS_3$, while 
\begin{equation}
	1-A_j^\circ[\calS_1,\calS_2,\calS_3,\epsilon,\delta,\varepsilon] = 
	\begin{cases}
		\prod_{j\in \calS \subseteq \calI_2\cup \calI_1} P_{\delta,\varepsilon}(\tilde{\varrho}_{\mathrm{F}_{\calS;0},\epsilon}) & (j\in \calS_1^\complement), \\ 
		\prod_{j\in \calS \subseteq \calI_3\cup \calI_2}	 P_{\delta,\varepsilon}(\tilde{\varrho}_{\mathrm{F}_{\calS;1},\epsilon}) & (j\in \calS_2^\complement), \\   
		\prod_{j\in \calS \subseteq \calI_1\cup \calI_3} P_{\delta,\varepsilon}(\tilde{\varrho}_{\mathrm{F}_{\calS;\infty},\epsilon}) & (j\in \calS_3^\complement)
	\end{cases}
\end{equation}
if $j\notin  \calS_1\cup \calS_2\cup \calS_3$. Evidently, each $A_j^\circ[\calS_1,\calS_2,\calS_3,\epsilon,\delta,\varepsilon]$ is in $C^\infty(\mathsf{2}(A_{\ell,m,n}))$. 
The only dependence of $A_j^\circ[\calS_1,\calS_2,\calS_3,\epsilon,\delta,\varepsilon]$ on $\calS_1,\calS_2,\calS_3$ is through whether or not $j\in \calS_1\cup \calS_2\cup \calS_3$. 

\begin{figure}[t]
	\begin{center}
		\includegraphics[scale=.45]{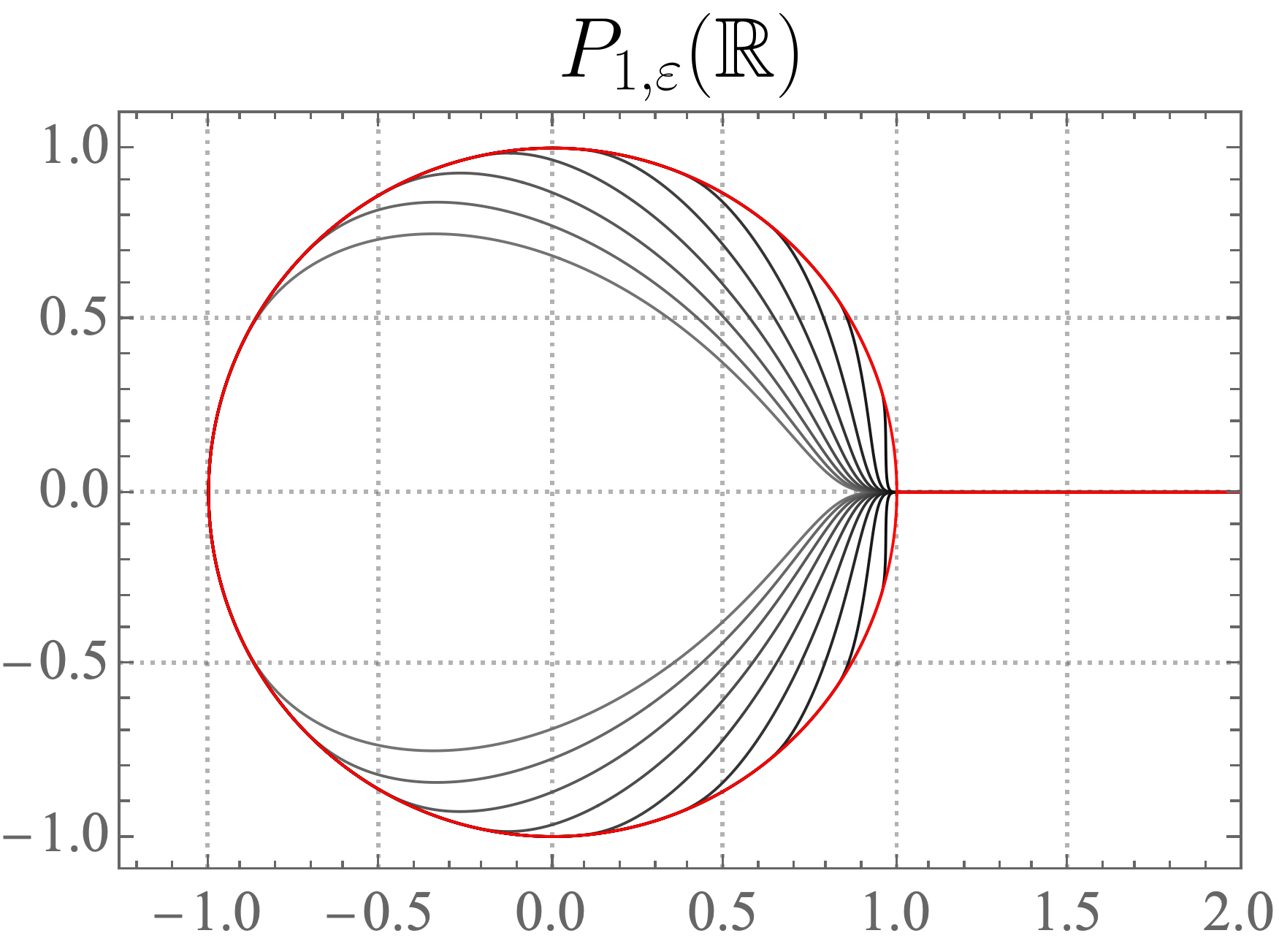}
	\end{center}
	\caption{The traces $P_{1,\varepsilon}(\bbR)\subseteq \bbC$ of $P_{1,\varepsilon}$ for $\varepsilon \in \{.05,\cdots,.45\}$, in various shades of gray, and of $P_1$, in red. Computed with $\Theta_{\mathrm{reg}}$ as in \cref{eq:misc_040}.}
\end{figure}

Let 
\begin{multline} 
	C[\calS_1,\calS_2,\calS_3] = \Pi^{-1}(\mathrm{cl}_{A_{\ell,m,n}} \{(a_1,\cdots,a_N)\in \square^{N\circ}_a: a_j<2/3\text{ if }j\in \calS_1\cup \calS_2\cup \calS_3  \\ \text{ and }a_j>1/3\text{ otherwise}\} ).
\end{multline} 
The interiors $C[\calS_1,\calS_2,\calS_3]^\circ$ cover $\mathsf{2}(A_{\ell,m,n})$. To see this, it suffices to note that $C[\calS_1,\calS_2,\calS_3]^\circ$ is the preimage of $\{(a_1,\cdots,a_N)\in \square^N_a: a_j<2/3\text{ if }j\in \calS_1\cup \calS_2\cup \calS_3 \text{ and }a_j>1/3\text{ otherwise}\}$ under $\mathrm{bd}\circ \Pi: \mathsf{2}(A_{\ell,m,n})\to \square_a^N$.  

The following proposition defines the map in \cref{eq:misc_059a} via stitching together the various locally defined maps above:
\begin{proposition}
	There exists some $\delta_0>0$ such that if $\epsilon \in (0,1/9)$, $\delta \in (0,\delta_0)$,  and $\varepsilon \in (0,\delta/2)$, then, given any $\calS_1,\calS_1'\subseteq \calI_1$, $\calS_2,\calS_2'\subseteq \calI_2$, and $\calS_3,\calS_3'\subseteq \calI_3$, 
	\begin{equation} 
		A_j^\circ[\calS_1,\calS_2,\calS_3,\epsilon,\delta,\varepsilon]=A_j^\circ[\calS_1',\calS_2',\calS_3',\epsilon,\delta,\varepsilon]
	\end{equation} 
	holds on $C[\calS_1,\calS_2,\calS_3] \cap C[\calS_1',\calS_2',\calS_3']$, for all $j\in \{1,\cdots,N\}$, so that there exists an element $A_j^\circ[\epsilon,\delta,\varepsilon]\in C^\infty(\mathsf{2}(A_{\ell,m,n});\bbC)$ whose restriction to the $C[\calS_1,\calS_2,\calS_3]$ are the $A_j^\circ[\calS_1,\calS_2,\calS_3,\epsilon,\delta,\varepsilon]$.
\end{proposition}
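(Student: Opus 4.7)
The plan is to verify that the two formulas for $A_j^\circ$ agree on the overlap, index by index in $j$, by reducing via the product identity \cref{eq:product_identity} to the trivial equation $a_j = 1 - (1-a_j)$. Recall the formula for $A_j^\circ[\calS_1,\calS_2,\calS_3,\epsilon,\delta,\varepsilon]$ depends on the triple only through whether $j \in \calS_1 \cup \calS_2 \cup \calS_3$; for each $j$, then, the two triples are in one of four configurations: $j$ in both unions, $j$ in neither, or $j$ in exactly one. The first two cases give literally identical formulas and present nothing to check. In the remaining (symmetric) case, assume without loss of generality that $j$ lies in the first union but not the second; the definitions of $C[\cdot,\cdot,\cdot]$ then constrain $a_j \in [1/3, 2/3]$ throughout the intersection.

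The key step is to choose $\delta_0 > 0$ such that, on the closed (hence compact) subset $\{a_j \in [1/3, 2/3]\} \subseteq \mathsf{2}(A_{\ell,m,n})$, every $\varrho_{\mathrm{F}_{\calS;x_0},\epsilon}$ appearing in either of the two product formulas exceeds $\delta_0$. Inspection of the definitions shows that in each of the three sub-cases $j \in \calI_1, \calI_2, \calI_3$, every such $\mathrm{F}_{\calS;x_0}$ has $j \in \calS$, and the corresponding hypersurface lifts a subset of either $\{a_j = 0\}$ or $\{a_j = 1\}$ (via the relations $a_j = x_j$, $a_j = 1/(1-x_j)$, or $a_j = (x_j-1)/x_j$, respectively). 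Both loci are disjoint from $\{a_j \in [1/3, 2/3]\}$, so each such $\varrho_{\mathrm{F},\epsilon}$ is a strictly positive continuous function on this compact set and hence attains a positive minimum. Since there are only finitely many pairs $(j,\mathrm{F})$ to consider, the minimum over all of them yields a valid $\delta_0$.

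For $\delta < \delta_0$ and $\varepsilon \in (0,\delta/2)$, every $|\tilde{\varrho}_{\mathrm{F},\epsilon}|$ appearing as an argument of $P_{\delta,\varepsilon}$ in either product exceeds $\delta$, so $P_{\delta,\varepsilon}(\tilde{\varrho}_{\mathrm{F},\epsilon}) = |\tilde{\varrho}_{\mathrm{F},\epsilon}| = \varrho_{\mathrm{F},\epsilon}$. Applying \cref{eq:product_identity} then collapses the two formulas to $A_j^\circ = \psi_\epsilon(a_j)$ and $1 - A_j^\circ = \psi_\epsilon(1-a_j)$, respectively. Since $\epsilon < 1/9$ forces $3\epsilon < 1/3 \leq a_j$ and $a_j \leq 2/3 < 1 - 3\epsilon$, one has $\psi_\epsilon(a_j) = a_j$ and $\psi_\epsilon(1-a_j) = 1-a_j$, and both formulas yield $A_j^\circ = a_j$. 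The main obstacle is bookkeeping rather than substantive: one must track the correct $x_0 \in \{0,1,\infty\}$ and index pair $(\calI_\alpha, \calI_\beta)$ across the three symmetric sub-cases, and verify in each that the hypersurfaces appearing in the relevant product genuinely constrain $a_j$ to $0$ or $1$.
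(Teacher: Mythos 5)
Your proof is correct and follows essentially the same route as the paper: reduce to the case where $j$ is in one union but not the other, observe that $a_j \in [1/3,2/3]$ on the overlap forces the relevant $\varrho_{\mathrm{F},\epsilon}$ with $\calS\ni j$ to be bounded away from zero, deduce $P_{\delta,\varepsilon}(\tilde{\varrho}_{\mathrm{F},\epsilon}) = \varrho_{\mathrm{F},\epsilon}\circ\Pi$ for small $\delta$, and collapse via the product identity and $\epsilon<1/9$ to get $A_j^\circ = a_j\circ\mathrm{bd}\circ\Pi$ from both formulas. The only detail you elide (and the paper is similarly terse) is that $\delta_0$ must be uniform in $\epsilon\in(0,1/9)$, which holds because the family $\{\varrho_{\mathrm{F},\epsilon}\}$ extends continuously to $\epsilon=0$.
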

The idea is that $A_j^\circ[\calS_1,\calS_2,\calS_3,-]$ only differs from $a_j$ close to $\{a_j=0\}\cup\{a_j=1\}$, so in-between -- which is the only region in which the proposition is not tautological -- the function $A_j^\circ[\calS_1,\calS_2,\calS_3,-]$ does not depend on any of the $\calS$.
\begin{proof}
	The $A_j^\circ[\calS_1,\calS_2,\calS_3,\epsilon,\delta,\varepsilon],A_j^\circ[\calS_1',\calS_2',\calS_3',\epsilon,\delta, \varepsilon]$ automatically agree unless $j\in (\calS_1\cup \calS_2\cup\calS_3) \Delta (\calS_1'\cup \calS_2'\cup\calS_3')$. Suppose that $j\in \calS_1\cup \calS_2\cup\calS_3$ and $j\notin \calS_1'\cup \calS_2'\cup\calS_3'$. 
	
	On $C=C[\calS_1,\calS_2,\calS_3] \cap C[\calS_1',\calS_2',\calS_3']$, both of $a_j \circ \mathrm{bd}\circ \Pi$ and $1-a_j\circ \mathrm{bd}\circ \Pi$ are bounded from below by $1/3$.
	Consequently, because $\epsilon<1/9$, we have $\psi_\epsilon(a_j\circ \mathrm{bd}\circ \Pi) = a_j\circ \mathrm{bd}\circ \Pi$ and $\psi_\epsilon(1-a_j\circ \mathrm{bd}\circ \Pi) = 1-a_j\circ \mathrm{bd}\circ \Pi$ on $C$. 
	Moreover, for any $\calS\subseteq \{1,\cdots,N\}$ satisfying $\calS\ni j$, we have 
	\begin{equation} 
		\inf_{p\in C} \inf_{\epsilon \in (0,1/9)} |\tilde{\varrho}_{\mathrm{F}_{\calS;x_0},\epsilon}(p)| >0.
	\end{equation}  
	It follows that, if $\delta$ is sufficiently small, 
	\begin{align}
	\begin{split}
	A_j^\circ[\calS_1,\calS_2,\calS_3,\epsilon,\delta,\varepsilon] &= 
	\begin{cases}
		\prod_{j\in \calS \subseteq \calI_1\cup \calI_3}\varrho_{\mathrm{F}_{\calS;\infty},\epsilon}\!\circ \Pi & (j\in \calS_1), \\ 
		\prod_{j\in \calS \subseteq \calI_2\cup \calI_1} \varrho_{\mathrm{F}_{\calS;0},\epsilon}\,\circ \Pi & (j\in \calS_2), \\   
		\prod_{j\in \calS \subseteq \calI_3\cup \calI_2} \varrho_{\mathrm{F}_{\calS;1},\epsilon}\,\circ \Pi & (j\in \calS_3),
	\end{cases}  \\ 
	1- A_j^\circ[\calS_1',\calS_2',\calS_3',\epsilon,\delta,\varepsilon] &= 
	 \begin{cases}
	 	\prod_{j\in \calS \subseteq \calI_2\cup \calI_1} \varrho_{\mathrm{F}_{\calS;0},\epsilon} \,\circ \Pi & (j\in \calS_1^\complement), \\ 
	 	\prod_{j\in \calS \subseteq \calI_3\cup \calI_2} \varrho_{\mathrm{F}_{\calS;1},\epsilon}\,\circ \Pi & (j\in \calS_2^\complement), \\   
	 	\prod_{j\in \calS \subseteq \calI_1\cup \calI_3} \varrho_{\mathrm{F}_{\calS;\infty},\epsilon}\!\circ \Pi & (j\in \calS_3^\complement).
	 \end{cases}
 	\end{split} 
	\end{align} 
	on $C$, since $|\tilde{\varrho}_{\mathrm{F}_{\calS;x_0},\epsilon}|=\varrho_{\mathrm{F}_{\calS;x_0},\epsilon}\circ \Pi$. 
	Consequently, by \cref{eq:product_identity}, we have 
	\begin{equation} 
		A_j^\circ[\calS_1,\calS_2,\calS_3,\epsilon,\delta,\varepsilon] = \psi_\epsilon\circ a_j \circ \mathrm{bd}\circ \Pi = a_j \circ \mathrm{bd}\circ \Pi
	\end{equation} 
	and 
	\begin{equation}
		1-A_j^\circ[\calS_1',\calS_2',\calS_3',\epsilon,\delta,\varepsilon] = \psi_\epsilon(1-a_j\circ\mathrm{bd}\circ\Pi)=1-a_j\circ\mathrm{bd}\circ\Pi
	\end{equation} 
	on $C$. 
	So, the functions $A_j^\circ[\calS_1,\calS_2,\calS_3,\epsilon,\delta,\varepsilon]$ and $A_j^\circ[\calS_1',\calS_2',\calS_3',\epsilon,\delta,\varepsilon]$ agree there. 
\end{proof}

If $j\in \calI_1$, let $Z_j^\circ[\epsilon,\delta,\varepsilon ] = -(1- A_j^\circ[\epsilon,\delta,\varepsilon ]) / A_j^\circ[\epsilon,\delta,\varepsilon ]$, if $j\in \calI_2$, let $Z_j^\circ[\epsilon,\delta,\varepsilon ] =  A_j^\circ[\epsilon,\delta,\varepsilon ]$, and if $j\in \calI_3$, let $Z_j^\circ[\epsilon,\delta,\varepsilon ] =  1/(1-A_j^\circ[\epsilon,\delta,\varepsilon ])$. 

We now check that $A^\circ,Z^\circ$ are, in fact, avoiding $0,1$. 

\begin{proposition}
	There exists some $\delta_1 \in (0,\delta_0]$ such that, for each $\epsilon \in (0,1/10)$, $\delta\in (0, \delta_1)$, and $\varepsilon \in (0,\delta/2)$,  $\smash{A_j^\circ[\epsilon,\delta,\varepsilon](p)}\notin \{0,1\}$
	for all $p\in \mathsf{2}(A_{\ell,m,n})$, and therefore the same applies to the $Z_j^\circ[\epsilon,\delta,\varepsilon]$. 
\end{proposition}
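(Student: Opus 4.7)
The plan is to work chart-by-chart on the finite cover $\{C[\calS_1,\calS_2,\calS_3]\}$ of $\mathsf{2}(A_{\ell,m,n})$ and coordinate $j$-by-$j$. For each such chart and each $j$, the defining formulas present either $A_j^\circ[\calS_1,\calS_2,\calS_3,\epsilon,\delta,\varepsilon]$ (if $j\in\calS_1\cup\calS_2\cup\calS_3$) or $1-A_j^\circ[\calS_1,\calS_2,\calS_3,\epsilon,\delta,\varepsilon]$ (otherwise) as a finite product of factors $P_{\delta,\varepsilon}(\tilde{\varrho}_{\mathrm{F},\epsilon})$. Since $P_{\delta,\varepsilon}$ is nonvanishing on $\bbR$ (as noted right after its definition), each such product is nonvanishing, which already yields \emph{one} of the two required non-equalities in every case.

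The remaining non-equality is the real content. The idea is to apply \cref{lem:conv_lemma} factor by factor with $f=\tilde{\varrho}_{\mathrm{F},\epsilon}\in C^0(\mathsf{2}(A_{\ell,m,n});\bbR)$, using the compactness of $\mathsf{2}(A_{\ell,m,n})$, to obtain uniform convergence
\begin{equation*}
P_{\delta,\varepsilon}(\tilde{\varrho}_{\mathrm{F},\epsilon}) \xrightarrow{\delta\to 0^+} |\tilde{\varrho}_{\mathrm{F},\epsilon}| = \varrho_{\mathrm{F},\epsilon}\circ\Pi \quad \text{in } C^0(\mathsf{2}(A_{\ell,m,n})),
\end{equation*}
uniformly in $\varepsilon\in (0,\delta/2)$. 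Since bounded uniformly-convergent sequences have uniformly-convergent products, combining this with the product identity \cref{eq:product_identity} gives, on the respective chart and as $\delta\to 0^+$ (uniformly in $\varepsilon$),
\begin{equation*}
A_j^\circ \longrightarrow \psi_\epsilon(a_j\circ\mathrm{bd}\circ\Pi) \;\; (j\in\calS_1\cup\calS_2\cup\calS_3), \quad 1-A_j^\circ \longrightarrow \psi_\epsilon((1-a_j)\circ\mathrm{bd}\circ\Pi) \;\; (\text{otherwise}).
\end{equation*}

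The geometric finish exploits the chart-defining inequalities: on $C[\calS_1,\calS_2,\calS_3]$ one has $a_j\leq 2/3$ if $j\in\calS_1\cup\calS_2\cup\calS_3$ and $a_j\geq 1/3$ otherwise. Since $\epsilon<1/10<1/9$ one has $1-3\epsilon>7/10>2/3$, so $\psi_\epsilon$ restricts to the identity on $[0,2/3]$. The limits above are therefore the real-valued functions $a_j\circ\mathrm{bd}\circ\Pi\in [0,2/3]$ and $(1-a_j)\circ\mathrm{bd}\circ\Pi\in [0,2/3]$ respectively, each at distance at least $1/3$ from the forbidden values $1$ and $0$. Shrinking $\delta$ until the uniform error on every chart and for every $j$ is less than $1/6$ (possible by finiteness of the cover, finiteness of the indices, and uniformity in $\varepsilon$) produces a single $\delta_1\in (0,\delta_0]$ yielding $|A_j^\circ-1|\geq 1/6$ or $|A_j^\circ|\geq 1/6$ as appropriate. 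There is no conceptual obstacle; the only nontrivial move is using \cref{eq:product_identity} to recognize the $\delta\to 0^+$ limits as the harmless quantities $\psi_\epsilon\circ a_j$ and $\psi_\epsilon\circ(1-a_j)$, which are bounded away from the bad values by the chart constraints.
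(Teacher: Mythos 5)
Your proposal is essentially the same argument the paper sketches as motivation before giving ``the details'': the trivial non-equality comes from the nonvanishing of $P_{\delta,\varepsilon}$, the other from $A_j^\circ\to a_j\circ\mathrm{bd}\circ\Pi$ as $\delta\to 0^+$ together with the chart bounds $a_j\leq 2/3$ (resp.\ $a_j\geq 1/3$) on $C[\calS_1,\calS_2,\calS_3]$. The paper's ``here are the details'' departs from your route only in that it avoids a limiting argument altogether: it applies \cref{eq:Pinq}, $|P_{\delta,\varepsilon}(r)|\leq\max\{|r|,\delta\}$, to each factor, observes that if every $\varrho$-factor exceeds $\delta$ then \cref{eq:product_identity} gives $|A_j^\circ|\leq\psi_\epsilon(a_j)\leq a_j\leq 2/3$, and otherwise the explicit choice
\begin{equation*}
\delta_1 = \inf_{\epsilon\in(0,1/10)}\min\Big\{\delta_0,\tfrac{1}{2}\prod_{\mathrm{F}}(1+\lVert\varrho_{\mathrm{F},\epsilon}\rVert_{L^\infty})^{-1}\Big\}
\end{equation*}
forces $|A_j^\circ|\leq 1/2$. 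That yields a concrete $\delta_1$ without invoking \cref{lem:conv_lemma}.

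There is one gap you should patch. The statement asserts a \emph{single} $\delta_1$ working for \emph{all} $\epsilon\in(0,1/10)$, but you obtain your threshold by ``shrinking $\delta$ until the uniform error $\ldots$ is less than $1/6$,'' citing ``finiteness of the cover, finiteness of the indices, and uniformity in $\varepsilon$'' --- but not uniformity in $\epsilon$. As stated, \cref{lem:conv_lemma} applies to a \emph{fixed} $f$, so applying it to $f=\tilde\varrho_{\mathrm{F},\epsilon}$ gives a $\delta$-threshold that a priori depends on $\epsilon$, and nothing in your argument upgrades this to an $\epsilon$-uniform one. It is in fact true that the threshold can be chosen uniformly: the proof of \cref{lem:conv_lemma} actually shows $\lVert|f|-P_{\delta,\varepsilon}\circ f\rVert_{C^0}\leq 3\delta$ with a constant independent of $f$ (on $S_1$ one has $||f|| + |P_{\delta,\varepsilon}\circ f|< 3\delta/2 + 3\delta/2$, on $S_2$ the difference vanishes), and the family $\{\varrho_{\mathrm{F},\epsilon}\}_\epsilon$ is uniformly bounded since it extends continuously to $\epsilon=0$, so the product convergence is also $\epsilon$-uniform. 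But that observation must be made explicitly; it is precisely what the paper's formula for $\delta_1$ and the remark about continuous extension to $\epsilon=0$ are there to guarantee.
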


\begin{proof} Since $P_{\delta,\varepsilon}$ is nonvanishing, $A_j^\circ[\epsilon,\delta,\varepsilon]$ is novanishing on $C[\calS_1,\calS_2,\calS_3]$ if $j\in \calS=\calS_1\cup \calS_2\cup \calS_3$. Similarly, $1-A_j^\circ[\epsilon,\delta,\varepsilon]$ is nonvanishing on  $C[\calS_1,\calS_2,\calS_3]$ if $j\notin \calS$. In short, taking $\delta \to 0^+$, we have
\begin{equation} 
		A_j^\circ [\epsilon,\delta,\varepsilon] \to  a_j\circ \mathrm{bd}\circ \Pi
\end{equation} 
in $C^0(\mathsf{2}(A_{\ell,m,n}))$, uniformly in $\epsilon ,\varepsilon$ (at least if $\epsilon$ stays away from $1/9$), and since $a_j\circ \mathrm{bd}\circ \Pi \leq 2/3$ on $C[\calS_1,\calS_2,\calS_3]$ if $j\in \calS_1\cup \calS_2\cup \calS_3$, this means that taking $\delta$ sufficiently small we force $A_j^\circ[\epsilon,\delta,\varepsilon] \neq 1$ on $C[\calS_1,\calS_2,\calS_3]$. 
	
Here are the details. Let 
\begin{equation} 
		\delta_1 =  \inf_{\epsilon \in (0,1/10)} \min\Big\{\delta_0, \frac{1}{2}  \prod_{\mathrm{F}\in \calF(A_{\ell,m,n})}(1+ \lVert \varrho_{\mathrm{F},\epsilon} \rVert_{L^\infty(A_{\ell,m,n})})^{-1} \Big\}.
\end{equation} 
This is nonzero because the family $\{\varrho_{\mathrm{F},\epsilon}\}_{\epsilon \in (0,1/9)}\subset L^\infty(A_{\ell,m,n})$ extends continuously down to $\epsilon = 0$.
	 
For $j\in \calS$ and $\delta\in (0,\delta_1)$, \cref{eq:Pinq}, together with \cref{eq:misc_975} and \cref{eq:product_identity}, implies that, on $C[\calS_1,\calS_2,\calS_3]$, 
\begin{equation} 
		|A_j^\circ[\epsilon,\delta,\varepsilon]| \leq \max\Big\{\frac{1}{2}, \psi_\epsilon (a_j \circ \mathrm{bd}\circ \Pi  )\Big\} \leq \max\Big\{\frac{1}{2}, a_j \circ \mathrm{bd}\circ \Pi  \Big\}  \leq \frac{2}{3},
\end{equation} 
so $A_j^\circ[\epsilon,\delta,\varepsilon]$ cannot equal $1$. Likewise, if $j\notin \calS$, then $|1-A_j^\circ[\epsilon,\delta,\varepsilon]| < 1$, so $A_j^\circ[\epsilon,\delta,\varepsilon]$ cannot equal $0$. 
\end{proof} 

We now begin the process of modifying $A^\circ,Z^\circ$ so as to define a multicontour landing in $\calM_N$. As stated above, the obstruction is that the components of $Z^\circ$ are not always distinct. 

The only issue is pairs $Z_j^\circ,Z_k^\circ$ for $j,k$ lying in the same member of $\{\calI_1,\calI_2,\calI_3\}$:
\begin{proposition} 
	There exists some $\delta_2 \in (0,\delta_1]$ such that, for each $\epsilon \in (0,1/10)$, $\delta\in (0,\delta_2)$, and $\varepsilon \in (0,\delta/2)$, if $j,k$ are distinct elements of $\{1,\cdots,N\}$ that are not in the same member of $\{\calI_1,\calI_2,\calI_3\}$, then 
\begin{equation} 
	Z_j^\circ[\epsilon,\delta,\varepsilon](p)\neq Z_k^\circ[\epsilon,\delta,\varepsilon](p)
	\label{eq:misc_zjzk}
\end{equation} 
for every $p\in \mathsf{2}(A_{\ell,m,n})$. 
\label{prop:differance}
\end{proposition}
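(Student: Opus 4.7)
The plan is to reduce the three cross-block cases to a single algebraic relation, then combine a limit-plus-compactness argument with a leading-order $\delta$-analysis near a single codimension-two corner of $\square^{\ell,m,n}$. First, unfolding the definitions of $Z_j^\circ,Z_k^\circ$ (and clearing denominators using that $A_\bullet^\circ\neq 0,1$ by the previous proposition), the equation $Z_j^\circ(p)=Z_k^\circ(p)$ is equivalent to a relation $A_\bullet^\circ(1-A_\star^\circ)=1$: for $j\in\calI_1,k\in\calI_2$ it reads $A_j^\circ(1-A_k^\circ)=1$, with both factors built from $P_{\delta,\varepsilon}\circ\tilde{\varrho}_{\mathrm{F}_{\calS;0}}$ for $\calS\subseteq\calI_1\cup\calI_2$; the cases $j\in\calI_1,k\in\calI_3$ and $j\in\calI_2,k\in\calI_3$ give analogous relations built from the $\mathrm{F}_{\calS;\infty}$ with $\calS\subseteq\calI_1\cup\calI_3$ and from the $\mathrm{F}_{\calS;1}$ with $\calS\subseteq\calI_2\cup\calI_3$. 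I treat the first case below; the others are identical modulo relabeling.

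Next, by \cref{lem:conv_lemma} and the product identity \cref{eq:product_identity}, $A_j^\circ[\epsilon,\delta,\varepsilon]\to a_j\circ\mathrm{bd}\circ\Pi$ in $C^0(\mathsf{2}(A_{\ell,m,n}))$ as $\delta\to 0^+$, uniformly over $\epsilon\in(0,1/10)$ and $\varepsilon\in(0,\delta/2)$. Thus $A_j^\circ(1-A_k^\circ)\to a_j(1-a_k)\in[0,1]$, attaining the value $1$ only where $a_j=1,a_k=0$, i.e.\ on $(\mathrm{bd}\circ\Pi)^{-1}(\mathrm{f}_{\{j,k\};0})$. For any fixed open neighborhood $\calU$ of this lift, compactness of $\mathsf{2}(A_{\ell,m,n})\setminus\calU$ forces the relation to fail off $\calU$ once $\delta$ is small; it remains to handle $\calU$, which I shrink arbitrarily.

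In $\calU$, abbreviate $\pi_\calS := P_{\delta,\varepsilon}(\tilde{\varrho}_{\mathrm{F}_{\calS;0},\epsilon})$ for $\calS\subseteq\calI_1\cup\calI_2$, and extract the common subproduct $Q=\prod_{\{j,k\}\subseteq\calS}\pi_\calS$ from $1-A_j^\circ = QR_j$ and $A_k^\circ = QR_k$, where $R_j=\prod_{j\in\calS,\,k\notin\calS}\pi_\calS$ and $R_k=\prod_{k\in\calS,\,j\notin\calS}\pi_\calS$. Since each $\pi_\calS\neq 0$, the relation $(1-QR_j)(1-QR_k)=1$ reduces to
\begin{equation}
R_j + R_k = Q\,R_j R_k.
\end{equation}
For $\calU$ sufficiently small the factor $\pi_{\{j,k\}}$ always lies in the ``near-zero'' regime $|\tilde{\varrho}_{\mathrm{F}_{\{j,k\};0}}|<\delta$ throughout $\calU$, so $|Q|=O(\delta)$.

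Finally, at each $p\in\calU$ classify every $\pi_\calS$ as \emph{large} (when $|\tilde{\varrho}_{\mathrm{F}_{\calS;0}}(p)|\geq\delta+2\varepsilon$, in which case $\pi_\calS=\varrho_{\mathrm{F}_{\calS;0}}(p)>0$ is real positive and \emph{sheet-independent} because $P_{\delta,\varepsilon}(r)=|r|$ there) or \emph{small} (when $|\tilde{\varrho}_{\mathrm{F}_{\calS;0}}(p)|<\delta$, in which case $\pi_\calS=-\delta+O(|\tilde{\varrho}_{\mathrm{F}_{\calS;0}}|)$ with a sheet-dependent complex correction of order $\delta$). Letting $m_Q,m_j,m_k$ count the small factors in $Q,R_j,R_k$ respectively, one has $Q=(-\delta)^{m_Q}C + O(\delta^{m_Q+1})$, $R_j = (-\delta)^{m_j}D_j + O(\delta^{m_j+1})$, and $R_k=(-\delta)^{m_k}D_k+O(\delta^{m_k+1})$ with $C,D_j,D_k>0$ real. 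The leading terms of $R_j,R_k$ combine without cancellation: if $m_j\neq m_k$ the lower-exponent one dominates, while if $m_j=m_k$ both leading terms share the common phase $(-1)^{m_j}$ and $D_j+D_k>0$. Hence $|R_j+R_k|\gtrsim\delta^{\min(m_j,m_k)}$, while $|QR_jR_k|=O(\delta^{m_Q+m_j+m_k})$, and since $m_Q\geq 1$ the RHS has strictly higher $\delta$-order. Equality is therefore impossible once $\delta$ is small enough, uniformly in $p\in\calU$, yielding the desired $\delta_2\in(0,\delta_1]$. The main obstacle is precisely this no-cancellation step; it rests on the sheet-independence of the large $\pi_\calS$ noted above, which is specific to the particular form of the Pochhammer regularization $P_{\delta,\varepsilon}$.
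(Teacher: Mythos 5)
Your algebraic reduction to $(1-QR_j)(1-QR_k)=1$, i.e.\ $R_j+R_k=QR_jR_k$, is correct, and the compactness argument off a neighborhood $\calU$ of the lifted corner is essentially the paper's opening move. However, the leading-order $\delta$-analysis inside $\calU$ has a genuine gap at the ``no-cancellation'' step, and this is exactly where the paper invests the real work.

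The problem is with the claimed expansion $R_j=(-\delta)^{m_j}D_j+O(\delta^{m_j+1})$ with $D_j>0$ real, and the assertion that when $m_j=m_k$ the leading terms of $R_j$ and $R_k$ ``share the common phase $(-1)^{m_j}$.'' A ``small'' factor $\pi_\calS=P_{\delta,\varepsilon}(\tilde{\varrho}_{\mathrm{F}_{\calS;0},\epsilon})$ with $|\tilde{\varrho}|<\delta$ has modulus essentially $\delta$ but argument $\pi(1-\tilde{\varrho}/\delta)$, which sweeps the entire interval $(0,2\pi)$ as $\tilde{\varrho}$ ranges over $(-\delta,\delta)$. You acknowledge a ``sheet-dependent complex correction of order $\delta$,'' but a correction of order $\delta$ to a leading term $(-\delta)$ is not negligible: the expansion $\pi_\calS=-\delta+O(|\tilde{\varrho}|)$ with $|\tilde{\varrho}|$ allowed to be comparable to $\delta$ is not a genuine asymptotic expansion. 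Consequently the argument of $\prod_{\text{small}}\pi_\calS$ is essentially arbitrary, not $(-1)^{m_j}$, and $R_j+R_k$ can suffer near-total cancellation at leading order. Separately, the implicit constants in $|R_j+R_k|\gtrsim\delta^{\min(m_j,m_k)}$ and $|QR_jR_k|\lesssim\delta^{m_Q+m_j+m_k}$ are not uniform: the ``large'' factors $\pi_\calS=\varrho_{\mathrm{F}_{\calS;0},\epsilon}\circ\Pi$ are only bounded below by $\delta$ (the threshold itself), so $D_j,D_k$ can degenerate as $\delta\to 0^+$, and the power counting does not yield a uniform separation between the two sides.

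What the paper does instead is compare the whole combination directly: it proves that $Q^{-1}\bigl(A_j^\circ A_k^\circ + 1 - A_j^\circ\bigr)=R_j+R_k-QR_jR_k$ converges in $L^\infty(\mathsf{2}(A_{\ell,m,n}))$, uniformly over $\epsilon,\varepsilon$, to $y_{j,k;\epsilon}\circ\Pi$, where $y_{j,k;\epsilon}$ is a smooth, strictly positive function on the compact $A_{\ell,m,n}$. That convergence is easy (continuity plus \cref{lem:conv_lemma}), and the decisive step is the positivity of $y_{j,k;\epsilon}$, which is a structural statement about the blowup geometry: $(x_k-x_j)/(1-x_j)$ vanishes precisely to the order prescribed by $\prod_{j,k\in\calS}\varrho_{\mathrm{F}_{\calS;0},\epsilon}$, so the quotient extends continuously to a strictly positive function. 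This positivity cannot be recovered from counting powers of $\delta$ in $|R_j+R_k|$ and $|QR_jR_k|$ separately, because those two quantities are genuinely comparable; only the real-valued limit of their difference controls the situation. Your proposal omits this ingredient, so as written the proof does not go through.
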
 
The basic idea, which we illustrate when $j\in \calI_2$ and $k\in \calI_3$, is that the difference $Z_j^\circ - Z_k^\circ$ is, near the set $(\mathrm{bd}\circ \Pi)^{-1}(\{z_j,z_k = 0\}) $ where \cref{eq:misc_zjzk} could conceivably fail,  the product of the non-vanishing quantity 
\begin{equation}
\prod_{j,k\in \calS\subseteq \calJ\cup \calK} P_{\delta,\varepsilon}(\tilde{\varrho}_{\mathrm{F}_{\calS;0},\epsilon})
\end{equation}
and a function whose real part is positive, with similar statements holding in the other cases. The key point is proving positivity. 
\begin{proof} 
Let $j\in \calJ$ and $k\in \calK$ denote distinct elements of $\{1,\cdots,N\}$ in different members $\calJ,\calK\in\{\calI_1,\calI_2,\calI_3\}$, and assume without loss of generality that $j<k$.  We now define $y_{j,k;\epsilon} \in C^\infty(A_{\ell,m,n};\bbR^+)$ measuring the separation between the $j$th and $k$th coordinates. Three cases need to be considered:
\begin{itemize}
	\item If $\calJ=\calI_1$ and $\calK=\calI_2$, let 
	\begin{equation} 
		y_{j,k;\epsilon} = \Big(\prod_{j,k\in \calS\subseteq \calJ\cup \calK} \varrho_{\mathrm{F}_{\calS;0},\epsilon} \Big)^{-1}  (1-x_j)^{-1} (x_k-x_j),
	\end{equation} 
	which, in terms of $a_j,a_k$ is given by $(\prod_{j,k\in \calS\subseteq \calJ\cup \calK} \varrho_{\mathrm{F}_{\calS;0},\epsilon} )^{-1}  (-a_k (1-a_j) +a_k + 1 - a_j )$, 
	\item if $\calJ = \calI_2$ and $\calK=\calI_3$, let 
	\begin{equation} 
		y_{j,k;\epsilon} = \Big(\prod_{j,k\in \calS\subseteq \calJ\cup \calK} \varrho_{\mathrm{F}_{\calS;1},\epsilon} \Big)^{-1}  x_k^{-1} (x_k-x_j),
	\end{equation} 
	which, in terms of $a_j,a_k$ is given by the same formula as in the previous case, 
	\item if $\calJ = \calI_1$ and $\calK = \calI_3$, let 
	\begin{equation} 
		y_{j,k;\epsilon} = (\prod_{j,k\in \calS\subseteq \calJ\cup \calK} \varrho_{\mathrm{F}_{\calS;\infty},\epsilon} )^{-1}  x_k^{-1}(1- x_j)^{-1} (x_k-x_j),
	\end{equation} 
	which, in terms of $a_j,a_k$, is given by the same formula as in the previous two cases except with $j,k$ switched: 
	\begin{equation} 
		y_{j,k;\epsilon} = (\prod_{j,k\in \calS\subseteq \calJ\cup \calK} \varrho_{\mathrm{F}_{\calS;\infty},\epsilon} )^{-1} (- a_j (1-a_k)+a_j + 1-a_k ).
	\end{equation} 
\end{itemize} 
That the $y_{j,k;\epsilon}$ lie in $C^\infty(A_{\ell,m,n})$ can be checked in local coordinates. More important is the observation (which just follows from factoring out copies of bdfs from $a_j,1-a_k$ or $1-a_j,a_k$) that $y_{j,k;\epsilon} \in C^0(A_{\ell,m,n};\bbR^+)$, as this implies that $\inf y_{j,k;\epsilon}>0$ on $A_{\ell,m,n}$.

Let $e_k = -1$ if $k\in \calI_3$ and $e_k = 0$ otherwise, and let $e_j=-1$ if $j\in \calI_1$ and $e_j=0$ otherwise.
Then, as argued below,
\begin{multline} 
	\lim_{\delta \to 0^+} \sup_{\epsilon \in (0,1/10)}\sup_{\varepsilon \in (0,\delta/2)} \Big\lVert  \Big(\prod_{j,k\in \calS\subseteq \calJ\cup \calK} P_{\delta,\varepsilon}(\tilde{\varrho}_{\mathrm{F}_{\calS;x_0},\epsilon}) \Big)^{-1}Z_k^{\circ e_k} (1-Z_j^{\circ})^{e_j}(Z_k^\circ - Z_j^\circ) \\ - y_{j,k;\epsilon}\circ \Pi  \Big\rVert_{L^\infty(\mathsf{2}(A_{\ell,m,n}))} = 0. 
	\label{eq:misc_057}
\end{multline} 
Thus, as long as $\delta$ is sufficiently small, 
\begin{equation}
 	\Big(\prod_{j,k\in \calS\subseteq \calJ\cup \calK} P_{\delta,\varepsilon}(\tilde{\varrho}_{\mathrm{F}_{\calS;x_0},\epsilon}) \Big)^{-1}Z_k^{\circ e_k} (1-Z_j^{\circ})^{e_j}(Z_k^\circ - Z_j^\circ)  >0 , 
 	\label{eq:misc_075}
\end{equation}
and this necessitates that $Z_j^\circ[\epsilon,\delta,\varepsilon](p)\neq Z_k^\circ[\epsilon,\delta,\varepsilon](p)$ for all $p\in \mathsf{2}(A_{\ell,m,n})$.

In order to prove \cref{eq:misc_057}, first note that, given any neighborhood 
\begin{equation} 
	U\supseteq \Pi^{-1}(\cup_{j,k\in \calS\subseteq \calJ\cup \calK} \mathrm{F}_{\calS;x_0}),
\end{equation} 
it is certainly the case, via \Cref{lem:conv_lemma}, that \cref{eq:misc_057} holds when the $L^\infty(\mathsf{2}(A_{\ell,m,n}))$-norm is replaced by the $L^\infty(\mathsf{2}(A_{\ell,m,n})\backslash U)$-norm. 
So, it suffices to check the situation near the excised set $\Pi^{-1}(\cup_{j,k\in \calS\subseteq \calJ\cup \calK} \mathrm{F}_{\calS;x_0})$. 

Near this set, we use the definitions of $Z_j^\circ[\epsilon,\delta,\varepsilon]$ and $Z_k^\circ[\epsilon,\delta,\varepsilon]$. All of the $P_{\delta,\varepsilon}(\tilde{\varrho}_{\mathrm{F},\epsilon})$ factors in the denominator on the left-hand side of \cref{eq:misc_075} cancel out with corresponding factors in the definitions of $Z_j^\circ,Z_k^\circ$. For instance, when $\calJ=\calI_1$ and $\calK=\calI_2$, 
\begin{multline}
	  \Big(\prod_{j,k\in \calS\subseteq \calJ\cup \calK} P_{\delta,\varepsilon}(\tilde{\varrho}_{\mathrm{F}_{\calS;x_0},\epsilon}) \Big)^{-1}Z_k^{\circ e_k} (1-Z_j^{\circ})^{e_j}(Z_k^\circ - Z_j^\circ)  \\ =  \Big(\prod_{j,k\in \calS\subseteq \calJ\cup \calK} P_{\delta,\varepsilon}(\tilde{\varrho}_{\mathrm{F}_{\calS;x_0},\epsilon}) \Big)^{-1} (-A_k^\circ(1-A_j^\circ) + A_k^\circ + 1-A_j^\circ ),
\end{multline}
and the factors of $\smash{P_{\delta,\varepsilon}(\tilde{\varrho}_{\mathrm{F}_{\calS;x_0},\epsilon})^{-1}}$ cancel with terms in the definition of $1-A_j^\circ,A_k^\circ$.

A similar statement applies to $y_{j,k;\epsilon}$, with the result being the same end result (which depends on what $\calJ,\calK$ are) without the ``$P_{\delta,\varepsilon}$'' surrounding the $\tilde{\varrho}_{\mathrm{F},\epsilon}$. For instance, in the $\calJ=\calI_1,\calK=\calI_2$ case,
\begin{equation}
	y_{j,k;\epsilon} = \Big(\prod_{j,k\in \calS\subseteq \calJ\cup \calK} \tilde{\varrho}_{\mathrm{F}_{\calS;x_0},\epsilon}\Big)^{-1} (-A_k^\circ(1-A_j^\circ) + A_k^\circ + 1-A_j^\circ ).
\end{equation}
The claim then follows from \Cref{lem:conv_lemma}.
\end{proof} 

A similar computation yields: 
\begin{lemmap}
	Let $j<k$ denote distinct elements of the \emph{same} member of $\{\calI_1,\calI_2,\calI_3\}$. 
	Then, letting $y_{j,k;\epsilon} \in C^\infty(A_{\ell,m,n};\bbR)$ be defined by 
	\begin{equation}
		a_k - a_j = y_{j,k;\epsilon} \prod_{x_0\in \{\infty,0,1\}}\prod_{j,k\in \calS} \varrho_{\mathrm{F}_{\calS;x_0},\epsilon}, 
	\end{equation}
	where the second product is over the subsets  $\calS\subseteq \{1,\cdots,N\}$ such that the $\varrho_{\mathrm{F}_{\calS;x_0},\epsilon}$ are defined, we have 
	\begin{multline} 
		\lim_{\delta \to 0^+}  \sup_{\epsilon \in (0,1/10)} \sup_{\varepsilon \in (0,\delta/2)} \Big\lVert  \Big(\prod_{x_0\in \{\infty,0,1\}}\prod_{j,k\in \calS} P_{\delta,\varepsilon}(\tilde{\varrho}_{\mathrm{F}_{\calS;x_0},\epsilon}) \Big)^{-1}\Big(A_k^\circ[\epsilon,\delta,\varepsilon] - A_j^\circ[\epsilon,\delta,\varepsilon]\Big) \\ - y_{j,k;\epsilon }\circ \Pi \Big\rVert_{L^\infty(\mathsf{2}(A_{\ell,m,n}))} = 0. 
	\end{multline}  
	\label{lem:misc}
\end{lemmap}
Note that unlike the $y_{j,k;\epsilon}$ appearing in the proof of the previous proposition, the $y_{j,k;\epsilon}$ appearing in \Cref{lem:misc} attain both signs on $A_{\ell,m,n}$.

Finally, $A$ can be defined. Let 
\begin{equation} 
	A_j[\epsilon,\delta,\varepsilon,\digamma] = A_j^\circ[\epsilon,\delta,\varepsilon] \times e^{i j/\digamma}
\end{equation}  
for $\digamma>0$.
If $j\in \calI_1$, let $Z_j[\epsilon,\delta,\varepsilon,\digamma] = -(1- A_j[\epsilon,\delta,\varepsilon,\digamma ]) / A_j[\epsilon,\delta,\varepsilon,\digamma ]$, if $j\in \calI_2$, let $Z_j[\epsilon,\delta,\varepsilon,\digamma ] =  A_j[\epsilon,\delta,\varepsilon,\digamma ]$, and for $j\in \calI_3$, let $j\in \calI_2$, let $Z_j[\epsilon,\delta,\varepsilon,\digamma] =  1/(1-A_j[\epsilon,\delta,\varepsilon,\digamma ])$.

This works:
\begin{proposition}
	For all $\epsilon \in (0,1/10)$, there exists some $\delta_3(\epsilon)\in (0,\delta_2)$ such that, for all $\delta \in (0,\delta_3)$ and $\varepsilon \in (0,\delta/2)$, there exists some $\digamma_0(\epsilon,\delta,\varepsilon)>0$ such that, for all $\digamma>\digamma_0$, the map $Z[\epsilon,\delta,\varepsilon,\digamma] \in C^\infty(\mathsf{2}(A_{\ell,m,n}); \bbC^N)$ defined by \begin{equation} 
		Z[\epsilon,\delta,\varepsilon,\digamma]=(Z_1[\epsilon,\delta,\varepsilon,\digamma],\cdots,Z_N[\epsilon, \delta,\varepsilon,\digamma])
	\end{equation}
	has image lying in $\calM_N(0,\infty)$. 
	\label{prop}
\end{proposition}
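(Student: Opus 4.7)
The plan is to check that $Z[\epsilon,\delta,\varepsilon,\digamma]$ sends each point of the compact space $\mathsf{2}(A_{\ell,m,n})$ to an $N$-tuple in $(\bbC\setminus\{0,1\})^N$ with pairwise distinct components. Compactness will be used throughout, and $\digamma$ is fixed last, large enough to absorb any finite obstruction.

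First, $A_j = A_j^\circ e^{ij/\digamma}$ is nonvanishing because $A_j^\circ$ is, by the preceding proposition. For $A_j \neq 1$: compactness plus the preceding proposition gives $c > 0$ with $|A_j^\circ - 1| \geq c$ uniformly in $p$ and $j$; taking $\digamma$ so large that $|e^{-ij/\digamma} - 1| < c$ for every $j \in \{1,\ldots,N\}$ forces $A_j^\circ \neq e^{-ij/\digamma}$, hence $A_j \neq 1$. Casework on the three cases in the LFT definition of $Z_j$ then gives $Z_j \notin \{0,1,\infty\}$. For distinctness $Z_j \neq Z_k$ when $j$ and $k$ lie in \emph{different} members of $\{\calI_1,\calI_2,\calI_3\}$: Proposition \ref{prop:differance} and compactness supply $\inf|Z_j^\circ - Z_k^\circ| \geq c' > 0$; since $A_j \to A_j^\circ$ uniformly in $C^0$ as $\digamma\to\infty$ and the LFTs are uniformly continuous on the bounded image of $A^\circ$ (which stays away from their poles by the preceding step), $Z_j \to Z_j^\circ$ uniformly, so $|Z_j - Z_k|$ is bounded below for $\digamma$ large.

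The substantive case is distinctness for $j,k$ in the \emph{same} member of $\{\calI_1,\calI_2,\calI_3\}$; there the LFTs defining $Z_j$ and $Z_k$ coincide, so $Z_j = Z_k \iff A_j = A_k$. I will decompose
\[
A_k - A_j = e^{ij/\digamma}\bigl[(A_k^\circ - A_j^\circ) + A_k^\circ(e^{i(k-j)/\digamma} - 1)\bigr]
\]
and split $\mathsf{2}(A_{\ell,m,n})$ into a neighborhood $U_\eta = \{|A_k^\circ - A_j^\circ| \leq \eta\}$ of the ``bad set'' $\{A_j^\circ = A_k^\circ\}$ and its complement, with a threshold $\eta = \eta(\digamma)$ to be chosen of order $1/\digamma$. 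Off $U_\eta$ the first bracketed term is bounded below by $\eta$, while the twist $A_k^\circ(e^{i(k-j)/\digamma} - 1)$ is $O(1/\digamma)$, so $|A_k - A_j| > 0$ once $\digamma$ is large relative to the implicit constant. On $U_\eta$, I invoke Lemma \ref{lem:misc} to approximate $A_k^\circ - A_j^\circ$ by $Q_{j,k}\cdot y_{j,k;\epsilon}\!\circ\!\Pi$, where $Q_{j,k}$ is a product of nonvanishing Pochhammer-type factors and $y_{j,k;\epsilon}$ is \emph{real}-valued; the twist contributes $\approx i A_k^\circ(k-j)/\digamma$, an imaginary direction scaled by the nonvanishing $A_k^\circ$, and because a real multiple of the complex number $Q_{j,k}$ cannot equal $-i A_k^\circ(k-j)/\digamma$ except on a lower-dimensional subset of $U_\eta$ whose image in $\digamma$ can be ruled out by compactness and taking $\digamma_0$ large, the two terms do not cancel.

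The main obstacle is this last step: on $U_\eta$ both bracketed terms are of order $1/\digamma$, so the argument cannot proceed by magnitude bounds alone and must exploit the real-valuedness of $y_{j,k;\epsilon}$ in Lemma \ref{lem:misc} to obstruct cancellation via a direction argument, using that $Q_{j,k}/A_k^\circ$ is continuous and bounded away from $i\bbR$ outside a controllable subset. Calibrating $\eta(\digamma)\sim 1/\digamma$ so that both regimes are simultaneously controlled, while absorbing the $o(1)$ error in Lemma \ref{lem:misc} (which decays as $\delta\to 0^+$ uniformly in $\epsilon,\varepsilon$), is the delicate technical point and is where the specific $\digamma_0(\epsilon,\delta,\varepsilon)$ emerges.
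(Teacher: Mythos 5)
Your overall structure matches the paper's: you reduce to checking $Z_j \neq 0,1$ and pairwise distinctness, handle the different-member case by \Cref{prop:differance} plus uniform convergence, and correctly identify the same-member case as the delicate one. You also correctly identify the relevant decomposition and that the argument must exploit the real-valuedness of $y_{j,k;\epsilon}$ in \Cref{lem:misc}. But the final direction argument has a genuine gap and, as stated, does not close.

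You claim that ``$Q_{j,k}/A_k^\circ$ is continuous and bounded away from $i\bbR$ outside a controllable subset'' and that the potential cancellation locus is ``a lower-dimensional subset of $U_\eta$ whose image in $\digamma$ can be ruled out by compactness and taking $\digamma_0$ large.'' This is not a valid argument: a lower-dimensional exceptional set can still be nonempty, and the proposition requires $Z_j \neq Z_k$ at \emph{every} point, so a single cancellation point already falsifies the claim. Moreover, ``image in $\digamma$'' is not a well-defined notion. The reason the direction argument actually works is stronger and more specific than what you state: with $j,k$ in the same member of $\{\calI_1,\calI_2,\calI_3\}$, the ratio $A_k^\circ / Q_{j,k}$ is exactly $\prod_{\calS\,:\,k\in\calS,\,j\notin\calS}P_{\delta,\varepsilon}(\tilde\varrho_{\mathrm{F}_{\calS;x_0},\epsilon})$, and on a neighborhood of $\Pi^{-1}(H_{j,k})$ (which contains your $U_\eta$ once $\eta$ is small) each factor $\tilde\varrho_{\mathrm{F}_{\calS;x_0},\epsilon}$ with exactly one of $j,k$ in $\calS$ is bounded away from zero --- those boundary hypersurfaces do not meet $H_{j,k}$. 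Hence, for $\delta$ small enough, $P_{\delta,\varepsilon}(\tilde\varrho_{\mathrm{F}_{\calS;x_0},\epsilon}) = \varrho_{\mathrm{F}_{\calS;x_0},\epsilon}\circ\Pi$ on this neighborhood, so $A_k^\circ/Q_{j,k}$ (and likewise $A_j^\circ/Q_{j,k}$) is \emph{real and positive} there, not merely ``bounded away from $i\bbR$ outside a controllable subset.'' The clean contradiction then is: vanishing of $A_k - A_j$ on the neighborhood forces $e^{i(k-j)/\digamma}\cdot(\text{positive real}) = (\text{positive real})$, impossible for $\digamma$ large. This is precisely the paper's argument: instead of invoking \Cref{lem:misc} inside the bad set, the paper computes $A_k - A_j$ directly as a difference of two products and cancels the common (complex) Pochhammer factors $Q_{j,k}$, leaving a comparison only of the real-positive factors. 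You should replace the ``lower-dimensional subset'' claim with this explicit positivity observation; otherwise the proof is incomplete at the one step you yourself flag as the delicate one.

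A secondary issue: your calibration $\eta(\digamma)\sim 1/\digamma$ ties the bad set $U_\eta$ to $\digamma$, but $\delta$ (which controls where $P_{\delta,\varepsilon}$ equals the identity, and hence where the positivity above holds) must be fixed \emph{before} $\digamma$. This is workable --- for $\digamma$ large, $U_\eta$ shrinks into any fixed neighborhood of $\Pi^{-1}(H_{j,k})$ on which the positivity holds --- but you need to say so; the paper sidesteps this entirely by fixing a $\digamma$-independent neighborhood $O_{j,k}$ and splitting on it, which is cleaner.
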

\begin{proof}
	We need to check that, for $\delta,\varepsilon$ sufficiently small and $\digamma>0$ sufficiently large, three things hold for all $p\in \mathsf{2}(A_{\ell,m,n})$: (1)  $Z_j[\epsilon,\delta,\varepsilon,\digamma](p) \neq 0$, (2) $Z_j[\epsilon,\delta,\varepsilon,\digamma](p) \neq 1$, and (3) $Z_j[\epsilon,\delta,\varepsilon,\digamma](p) \neq Z_k[\epsilon,\delta,\varepsilon,\digamma](p)$ for any distinct $j,k\in \{1,\cdots,N\}$. 
	The first of these follows immediately from the fact that $Z_j^\circ [\epsilon,\delta,\varepsilon]$, and therefore $Z_j[\epsilon,\delta,\varepsilon]$, is non-vanishing. 
	The second follows from the observation that
	\begin{equation} 
		\lim_{\digamma\to \infty} \lVert Z_j[\epsilon,\delta,\varepsilon,\digamma] - Z_j^\circ[\epsilon,\delta,\varepsilon] \rVert_{L^\infty(\mathsf{2}(A_{\ell,m,n}))} = 0,
		\label{eq:misc_059}
	\end{equation} 
	which implies the claim because $\smash{1-Z_j^\circ[\epsilon,\delta,\varepsilon]}$ is nonvanishing on $\mathsf{2}(A_{\ell,m,n})$ (and is therefore bounded away from $0$, since $\mathsf{2}(A_{\ell,m,n})$ is compact). 
	So, it remains to check that, for all $p\in \mathsf{2}(A_{\ell,m,n})$, $Z_j[\epsilon,\delta,\varepsilon,\digamma](p) \neq Z_k[\epsilon,\delta,\varepsilon,\digamma](p)$ for any distinct $j,k\in \{1,\cdots,N\}$. 
	
	If $j,k$ are in different members of $\{\calI_1,\calI_2,\calI_3\}$, then for all $\delta \in (0,\delta_2)$ and $\varepsilon \in (0,\delta/2)$, \Cref{prop:differance}, together with \cref{eq:misc_059}, implies that as long as $\digamma$ is sufficiently large, $Z_j[\epsilon,\delta,\varepsilon,\digamma](p) \neq Z_k[\epsilon,\delta,\varepsilon,\digamma](p)$ for any $p\in \mathsf{2}(A_{\ell,m,n})$.
	
	The remaining case, when $j,k$ are in the same member of $\{\calI_1,\calI_2,\calI_3\}$, is more delicate. Let $O_{j,k}\subset \mathsf{2}(A_{\ell,m,n})$ denote an open neighborhood of $\Pi^{-1}(H_{j,k})$ 
	\begin{itemize}
		\item whose closure intersects only those lifted faces $\Pi^{-1}(\mathrm{F})$  for which $H_{j,k}$ intersects $\mathrm{F}$, i.e.\ those of the form $\mathrm{F}_{\calS,x_0}$ for $\calS\subseteq \{1,\cdots,N\}$ such that $j,k\in \calS$, and 
		\item that is covered by the sets $C[\calS_1,\calS_2,\calS_3]$ for $j,k$ either both in $\calS_1\cup \calS_2\cup \calS_3$ or both not (whose interiors form an open cover of $H_{j,k}$). 
	\end{itemize}
	These properties can be arranged simultaneously. 
	
	Because $O_{j,k}$ contains $\Pi^{-1}(H_{j,k})$, which is the vanishing set of $y_{j,k;\epsilon}$, \Cref{lem:misc} implies that, as long as $\delta$ is sufficiently small and $\digamma$ is sufficiently large (depending on $\delta$), the zeroes of  $A_k[\epsilon,\delta,\varepsilon,\digamma]- A_j[\epsilon,\delta,\varepsilon,\digamma]$ will be contained entirely in this set. Since the linear fractional transformations defining $Z_j$ from $A_j$ are injective, the same therefore holds for $Z_k[\epsilon,\delta,\varepsilon,\digamma]- Z_j[\epsilon,\delta,\varepsilon,\digamma]$.

	In the set $C[\calS_1,\calS_2,\calS_3]$, if $j,k\in \calS_1\cup \calS_2\cup \calS_3$,
	\begin{multline}
		A_k[\epsilon,\delta,\varepsilon,\digamma] -A_j[\epsilon,\delta,\varepsilon,\digamma] \\ = 
		\begin{cases}
			e^{ik/\digamma}\prod_{k\in \calS \subseteq \calI_1\cup \calI_3} P_{\delta,\varepsilon}(\tilde{\varrho}_{\mathrm{F}_{\calS;\infty},\epsilon}) - 	e^{ij/\digamma}\prod_{j\in \calS \subseteq \calI_1\cup \calI_3} P_{\delta,\varepsilon}(\tilde{\varrho}_{\mathrm{F}_{\calS;\infty},\epsilon}) & (j\in \calS_1), \\ 
			e^{ik/\digamma}\prod_{k\in \calS \subseteq \calI_2\cup \calI_1} P_{\delta,\varepsilon}(\tilde{\varrho}_{\mathrm{F}_{\calS;0},\epsilon})  \;- e^{ij/\digamma}\prod_{j\in \calS \subseteq \calI_2\cup \calI_1} P_{\delta,\varepsilon}(\tilde{\varrho}_{\mathrm{F}_{\calS;0},\epsilon})& (j\in \calS_2), \\   
			e^{ik/\digamma}\prod_{k\in \calS \subseteq \calI_3\cup \calI_2} P_{\delta,\varepsilon}(\tilde{\varrho}_{\mathrm{F}_{\calS;1},\epsilon}) \;- e^{ij/\digamma}\prod_{j\in \calS \subseteq \calI_3\cup \calI_2} P_{\delta,\varepsilon}(\tilde{\varrho}_{\mathrm{F}_{\calS;1},\epsilon})  & (j\in \calS_3).
		\end{cases} 
	\end{multline}
	Thus, if this vanishes at some point $p\in O_{j,k} \cap C[\calS_1,\calS_2,\calS_3]$,  we have 
	\begin{equation}
		e^{i (k-j)/\digamma}  \prod_{\calS\text{ s.t.\,} k\in \calS, j\notin \calS} P_{\delta,\varepsilon}(\tilde{\varrho}_{\mathrm{F}_{\calS;x_0},\epsilon}(p)) =  \prod_{\calS\text{ s.t.\,} j\in \calS, k\notin \calS} P_{\delta,\varepsilon}(\tilde{\varrho}_{\mathrm{F}_{\calS;x_0},\epsilon}(p)), 
		\label{eq:misc_061}
	\end{equation}
	where $x_0 = \infty$ if $j,k\in \calI_1$, $x_0=0$ if $j,k\in \calI_2$, and $x_0 =1$ if $j,k\in \calI_3$. But, if $\delta$ is sufficiently large, then, owing to the assumption that the closure of $O_{j,k}$ intersects only those $\Pi^{-1}(\mathrm{F})$ for which $H_{j,k}$ intersects $\mathrm{F}$, each of the $P_{\delta,\varepsilon}(\tilde{\varrho}_{\mathrm{F}_{\calS;x_0},\epsilon})$ terms in \cref{eq:misc_061} satisfy  
	\begin{equation} 
		P_{\delta,\varepsilon}(\tilde{\varrho}_{\mathrm{F}_{\calS;x_0},\epsilon}) = \varrho_{\mathrm{F}_{\calS;x_0},\epsilon}\circ \Pi
	\end{equation} 
	on $O_{j,k}$. In particular, they are real-valued. Thus, since the right-hand side of \cref{eq:misc_061} is a nonzero real number and, if $\digamma$ is sufficiently large, the left-hand side is a nonreal complex number, we have reached a contradiction. The conclusion is that $A_k[\epsilon,\delta,\varepsilon,\digamma] -A_j[\epsilon,\delta,\varepsilon,\digamma]$ cannot vanish in $O_{j,k} \cap C[\calS_1,\calS_2,\calS_3]$.
	
	The situation in the set $C[\calS_1,\calS_2,\calS_3]$ if $j,k\notin \calS_1\cup \calS_2\cup \calS_3$ is analogous.
\end{proof}

\begin{remark}
	$\delta_3(\epsilon)$ can actually be taken independent of $\epsilon$, as the same argument with more bookkeeping shows.
\end{remark}

\section{Lifting lemma}

We now check that, for any $\epsilon \in (0,1/10)$, $\delta \in (0,\delta_3)$, $\varepsilon \in (0,\delta/2)$,  and $\digamma>\digamma_0$, the map $Z[\epsilon,\delta,\varepsilon,\digamma]: \mathsf{2}(A_{\ell,m,n})\to \calM_N(0,\infty)$, defined in the previous section above \Cref{prop}, lifts to a continuous map 
\begin{equation}
	\widehat{Z}[\epsilon,\delta,\varepsilon,\digamma] : \mathsf{2}(A_{\ell,m,n})\to \widehat{\calM}_N(0,\infty), 
\end{equation}
where recall that $\widehat{\calM}_N(0,\infty) = \widetilde{\calM}_N(0,\infty) / [\pi_1(\calM_N(0,\infty)),\pi_1(\calM_N(0,\infty))]$ is the monodromy cover of $\calM_N$ discussed in the introduction.

If $Z[\epsilon,\delta,\varepsilon,\digamma]$ lifts, this lift is automatically smooth, given that $Z[\epsilon,\delta,\varepsilon,\digamma]$ is. After all, the covering map 
\begin{equation} 
\widehat{\calM}_N(0,\infty)\to \calM_N(0,\infty)
\end{equation} 
is locally a diffeomorphism. 

The given map lifts if and only if, for arbitrary $p_0\in \mathsf{2}(A_{\ell,m,n})$, the induced map 
\begin{equation}
	Z[\epsilon,\delta,\varepsilon,\digamma]_* : \pi_1(\mathsf{2}(A_{\ell,m,n}),p_0) \to \pi_1(\calM_N(0,\infty),p_1),
	\label{eq:misc_062}
\end{equation}
where $p_1 = Z[\epsilon,\delta,\varepsilon,\digamma](p_0)\in \calM_N(0,\infty)$,
has image lying in the commutator subgroup of the codomain \cite[Prop.\ 1.33]{Hatcher}. 

Any two of these maps, for different values of $\delta,\varepsilon,\digamma$, are homotopic, via a homotopy dialing these parameters, so it suffices to check a single one. Taking $\epsilon \in (0,1/10)$ and $\delta$ small enough, we can assume that there exists some $p \in \smash{(\square^{N}_a)^\circ} \cap \calM_N(0,\infty)$ such that $A_j[\epsilon,\delta,\varepsilon,\digamma](p_0) = e^{ij/\digamma} a_j $ for all $p_0\in \Pi^{-1}(p)$, where $a_j$ is the $j$th coordinate of $p$ in $\square^N_a$. 
It follows that the map 
\begin{equation}
	Z[\epsilon,\delta,\varepsilon,\digamma]_* : \pi_1(\mathsf{2}(A_{\ell,m,n})) \to \pi_1(\calM_N(0,\infty))
\end{equation}
of groupoids restricts to a map 
\begin{equation} 
	\pi_1(\mathsf{2}(A_{\ell,m,n}),\Pi^{-1}(p)) \to \pi_1(\calM_N(0,\infty),p(\digamma))
\end{equation} 
of groupoids, where $p(\digamma)$ is the point in $\bbC^N_a$ with $j$th coordinate $e^{ij/\digamma}a_j$. 

For each $\mathrm{F}\in \calF(A_{\ell,m,n})$, it is the case that, as long as $\delta$ is sufficiently small, the image of $\gamma_{\calF,\mathrm{F}}:[0,1] \to \mathsf{2}(A_{\ell,m,n})$ under $Z[\epsilon,\delta,\varepsilon,\digamma]$ depends on the subset $\calF\subseteq \calF(A_{\ell,m,n})$ only through on whether or not $\mathrm{F}\in \calF$. Indeed, $\gamma_{\calF,\mathrm{F}}$ stays away from every $\Pi^{-1}(\mathrm{F}_0)$ for $\mathrm{F}_0\in \calF(A_{\ell,m,n})\backslash \{\mathrm{F}\}$, so, if $\delta$ is sufficiently small, then 
\begin{equation}
	P_{\delta,\varepsilon}(\tilde{\varrho}_{\mathrm{F}_0}) = \varrho_{\mathrm{F}_0}\circ \Pi 
\end{equation}
on its image. 

Thus, the $\calF$-dependence of $Z[\epsilon,\delta,\varepsilon,\digamma]$ on $\gamma_{\calF,\mathrm{F}}$ enters only through the $P_{\delta,\varepsilon}(\tilde{\varrho}_{\mathrm{F}})$ terms. But, this function, on the image of $\gamma_{\calF,\mathrm{F}}$, depends on $\calF$ only through on whether or not $\mathrm{F}\in \calF$. 
We can therefore conclude from \Cref{lem:lifting_lemma} that the image of \cref{eq:misc_062} lies in the desired commutator subgroup.

\section{Computation of pairing}

Finally, we check that, for an appropriate choice of branch of the lift $\widehat{Z}[\epsilon,\delta,\varepsilon,\digamma]$, 
\begin{multline}
	\int_{\mathsf{2}(A_{\ell,m,n})}  \widehat{Z}[\epsilon,\delta,\varepsilon,\digamma]^* \omega(\bmalpha,\bmbeta,\bmgamma) = \Big[ \prod_{\varnothing\subsetneq \calS\subseteq \{1,\cdots,\ell+m\}} 2i\sin(\pi \alpha_{\calS} )  \Big] \Big[ \prod_{\varnothing \subsetneq \calS \subseteq \{\ell+1,\cdots,N\}} 2i\sin(\pi \beta_{\calS})  \Big] \\ \times  \Big[ \prod_{\varnothing\subsetneq \calS\subseteq \{1,\cdots,\ell\}\cup \{\ell+m+1,\cdots,N\} } 2i\sin(\pi \zeta_{\calS})  \Big] I_{\ell,m,n}(\bmalpha,\bmbeta,\bmgamma),
\end{multline} 
which is a restatement of \cref{eq:fundamental} in the $r=0,R=\infty$ case.

Since $\omega$ is a closed form, as changing $\delta,\varepsilon,\digamma$ results in a homotopy of the mapping $\hat{Z}[\epsilon,\delta,\varepsilon,\digamma]$, it suffices to prove this result for a single value of these parameters, for each $\epsilon$.

Moreover, via the analytic dependence of $\omega(\bmalpha,\bmbeta,\bmgamma)$ on the parameters $\bmalpha,\bmbeta,\bmgamma$, it suffices to prove that the formula above holds when the components of $\bmgamma$ are all positive and $\alpha_j,\beta_j>0$ if $j\in \{\ell+1,\cdots,\ell+m\}$, $\alpha_j,\zeta_j>0$ if $j\in \{1,\cdots,\ell\}$, and $\beta_j,\zeta_j>0$ if $j\in \{\ell+m+1,\cdots,N\}$. Here, $\zeta_j = \zeta_{\{j\}}$, 
where $\zeta_{\calS}$ was defined in \cref{eq:zetadef}. These conditions pick out a nonempty, open conic subset's worth of 
\begin{equation} 
(\bmalpha,\bmbeta,\bmgamma) \in \bbC^{N}\times \bbC^N \times \bbC^{N(N-1)/2}.
\end{equation} 
Indeed, the conditions pick out an open conic subset, and it is nonempty because the ones involving $\bmalpha,\bmbeta$ are satisfied if 
\begin{itemize}
	\item $\alpha_j>0$ if $j\in \{1,\cdots,\ell+m\}$ and $-\alpha_j \gg \lVert \bmgamma \rVert + \beta_j$ if $j\in \{\ell+m+1,\cdots,N\}$,
	\item $\beta_j>0$ if $j\in \{\ell+1,\cdots,N\}$ and $-\beta_j\;\! \gg \lVert \bmgamma \rVert + \alpha_j$ if $j\in \{1,\cdots,\ell\}$,
\end{itemize}
where ``$\gg$'' means ``$\geq$'' with an unspecified large constant on the right-hand side.

For such $\bmalpha,\bmbeta,\bmgamma$, 
\begin{equation}
	\lim_{\delta \to 0^+}	 \lim_{\digamma\to\infty } \lim_{\varepsilon \to 0^+} \int_{\mathsf{2}(A_{\ell,m,n})}  \widehat{Z}[\epsilon,\delta,\varepsilon,\digamma]^* \omega(\bmalpha,\bmbeta,\bmgamma)    = \Big[\sum_{\calF\subseteq \calF(A_{\ell,m,n}) }(-1)^{|\calF|} e^{i\theta_\calF} \Big]  I_{\ell,m,n}(\bmalpha,\bmbeta,\bmgamma)
	\label{eq:misc_068}
\end{equation}
for some phases $\theta_{\calF}\in \bbR$. Indeed, the contribution of the portion of the integral over 
\begin{equation}
	\bigcup_{j=1}^N \operatorname{supp} (A_j^\circ[\epsilon,\delta,\varepsilon]^* - a_j\circ \mathrm{bd}\circ \Pi  )
\end{equation}
is $o(1)$ in the stated limit, and the integral of the rest consists of a sum of integrals each of which differs from $I_{\ell,m,n}(\bmalpha,\bmbeta,\bmgamma)$ by a branch and an $o(1)$ error. 
The branch is captured by the phase $\smash{e^{i\theta_\calF}}$ in \cref{eq:misc_068}, and the signs $\smash{(-1)^{|\calF|}}$ take care of the fact that the map $\Pi$ is orientation reversing on half of the components of the subset 
\begin{equation} 
\Pi^{-1}(A_{\ell,m,n}^\circ)\subseteq \mathsf{2}(A_{\ell,m,n}).
\end{equation}

Since the left-hand side of \cref{eq:misc_068} is independent of $\delta,\digamma,\varepsilon$, the stated equality actually holds without taking the limit:
\begin{equation}
	 \int_{\mathsf{2}(A_{\ell,m,n})}  \widehat{Z}[\epsilon,\delta,\varepsilon,\digamma]^* \omega(\bmalpha,\bmbeta,\bmgamma)    = \Big[\sum_{\calF\subseteq \calF(A_{\ell,m,n}) } (-1)^{|\calF|} e^{i\theta_\calF} \Big]  I_{\ell,m,n}(\bmalpha,\bmbeta,\bmgamma). 
	 \label{eq:misc_070}
\end{equation}
All that needs to be done is compute the $\theta_{\calF}$.

To simplify this computation, the branch of the lift $\widehat{Z}[\epsilon,\delta,\varepsilon,\digamma]$ can be chosen such that $\theta_\varnothing = 0$. 

Considering the power set $2^{\calF(A_{\ell,m,n})}$ of $\calF(A_{\ell,m,n})$ as the abelian group $\bbZ_2^{\calF(A_{\ell,m,n})}$, the map 
\begin{equation} 
	\calF(A_{\ell,m,n})\supseteq  \calF  \mapsto \theta_\calF \in \bbR, 
\end{equation} 
being a monodromy map, is an additive homomorphism. Thus, $\theta_\calF = \sum_{\mathrm{F}\in \calF }  \theta_{\{\mathrm{F}\}}$. 

The $\theta_{\{\mathrm{F}\}}$ can be computed by computing the loop $Z[\epsilon,\delta,\varepsilon,\digamma]_* [\gamma_{\varnothing,\mathrm{F}}] \in \pi_1(\calM_N(0,\infty))$. 
\begin{itemize}
	\item If $\mathrm{F} =\mathrm{F}_{\calS;0}$ for $\calS\subseteq \calI_1\cup \calI_2$, then $Z[\epsilon,\delta,\varepsilon,\digamma]_* [\gamma_{\varnothing,\mathrm{F}}]$ is homotopic to the map $\gamma: [0,1] \to \calM_N(0,\infty)$ such that the components $\gamma_j$ for $j\in \{1,\cdots,N\}\backslash \calS$ are constant and satisfy $|\gamma_j|>1/2$, and, for $j\in \calS$, 
	\begin{equation} 
		\gamma_j(t) =  
		\begin{cases}
			- 2^{-1} j^{-1} e^{2\pi i t} & (j\in \calI_1), \\  
			2^{-1} j^{-1} e^{2\pi i t} & (j\in \calI_2). 
		\end{cases}
	\end{equation}
	This is homotopic to the path where each $\gamma_j$ traverses the circle $\{|z| = 2^{-1} j^{-1}\} \subset \bbC$ choice of  (counter-clockwise, with the same starting point as before) one-by-one while the other components stay fixed.  
	From this perspective, it is apparent that $\gamma_j$ winds around $0$ once counter-clockwise, and, if $k>j$ for $j,k\in \calS$, then $\gamma_j$ winds once counter-clockwise around $\gamma_{k}$. The former contributes a monodromy of $\pi \alpha_j$, while the latter contributes a monodromy of $2\pi \gamma_{j,k}$. Thus, the overall monodromy is given by 
	\begin{equation}
		\theta_{\{\mathrm{F}\}} =  2\pi \sum_{j\in \calS} \alpha_j + 4 \pi \sum_{j,k\in \calS,j<k} \gamma_{j,k} =2 \pi \alpha_{\calS}, 
	\end{equation}
	where $\alpha_{\calS}$ was defined in \cref{eq:abdef}. 
	\item If $\mathrm{F} =\mathrm{F}_{\calS;1}$ for $\calS\subseteq \calI_2\cup \calI_3$, then the monodromy is computed analogously, with $\beta$ in place of $\alpha$, so 
	\begin{equation}
	\theta_{\{\mathrm{F}\}} = 2 \pi \sum_{j\in \calS} \beta_j + 4 \pi \sum_{j,k\in \calS,j<k} \gamma_{j,k} = 2\pi \beta_{\calS}, 
	\end{equation}
	where $\beta_{\calS}$ was defined in \cref{eq:abdef}.
	\item If $\mathrm{F}=\mathrm{F}_{\calS;\infty}$ for $\calS\subseteq \calI_1\cup \calI_3$, then $Z[\epsilon,\delta,\varepsilon,\digamma]_* [\gamma_{\varnothing,\mathrm{F}}]$ is homotopic to the map $\gamma:[0,1] \to \calM_N(0,\infty)$ such that the components $\gamma_j$ for $j\in \{1,\cdots,N\}\backslash \calS$ are constant and satisfy $\gamma_j \in (-1,+2)$ and, for $j\in \calS$, 
	\begin{equation}
		\gamma_j(t) = 
		\begin{cases}
		1-2je^{-2\pi i t} & (j\in \calI_1), \\
		2j e^{-2\pi i t} & (j\in \calI_3). 
		\end{cases}
	\end{equation}
	As before, this is homotopic to a path where only one $\gamma_j$ is moving at a time, from which it can be seen that $\gamma_j$ is winding once clockwise around each of $0,1$. 
	Additionally, for $k>j$ with $k\in \calS$, $\gamma_{k}$ is winding around $\gamma_j$ once clockwise. Finally, for $k>j$ with $j\in \calS$ and $k\notin \calS$, $\gamma_j$ is winding around $\gamma_{k}$ once clockwise as well. The total monodromy is therefore 
	\begin{equation}
	- 2\pi \sum_{j\in \calS} (\alpha_j+ \beta_j) - 4\pi \sum_{k>j, \{j,k\} \cap \calS\neq \varnothing} \gamma_{j,k} = 2\pi \zeta_{\calS}.
	\end{equation}
\end{itemize}

Thus, \cref{eq:misc_070} becomes the desired \cref{eq:fundamental}, up to an overall phase which can be eliminated with a different choice of branch.
This completes the proof of the main theorem in the $r=0,R=\infty$ case.

Because $\mathsf{2}(A_{\ell,m,n})$ is compact, the case $R\geq R_0$ for $R_0\gg 1$ sufficiently large immediately follows. 
The punctured annuli $\{z\in \bbC: r <|z| < R, z\neq 1\}$, for $r\in (0,1)$ and $R>1$, can all be deformation retracted to each other through diffeomorphisms. Consequently, the same applies to the moduli spaces $\calM_N(r,R)$, and likewise for their monodromy covers. The general case of the main theorem therefore follows from that which has already been proven.

\appendix

\section{Remark on $N=1$ case}

At first glance, it might seem surprising that the cycles $\iota_{1,0,0}(\bbS^1)$, $\iota_{0,1,0}(\bbS^1)$, and $\iota_{0,0,1}(\bbS^1)$ agree up to a choice of branch, as 
\begin{align}
	I_{1,0,0}(\alpha,\beta)&= (-1)^\alpha \int_{-\infty}^0 (-x)^\alpha (1-x)^\beta \dd x =  \frac{(-1)^\alpha \Gamma(1+\alpha)\Gamma(-1-\alpha-\beta)}{\Gamma(-\beta)}, \\ 
	I_{0,0,1}(\alpha,\beta)&= (-1)^\beta \int_1^\infty x^\alpha(x-1)^\beta \dd x = \frac{(-1)^\beta \Gamma(-1-\alpha-\beta)\Gamma(1+\beta)} {\Gamma(-\alpha)}
\end{align}
when the respective integrals are well-defined. 
By our main theorem, the right-hand sides must differ from $I_{0,1,0}(\alpha,\beta)=B(\alpha,\beta) = \Gamma(1+\alpha)\Gamma(1+\beta)/\Gamma(2+\alpha+\beta)$ by a product of trigonometric functions and a phase. That this does in fact hold is a consequence of the Gamma function's reflection identity $\sin(\pi z)\Gamma(z) \Gamma(1-z) =\pi$. The reasoning can also be run in reverse to yield a proof of the reflection identity.





\printbibliography
	
\end{document}